\documentclass{article}

\usepackage[a4paper,top=2.54cm,bottom=2.54cm,left=3.17cm,right=3.17cm,%
            includehead,includefoot]{geometry}

\usepackage{amsmath,amssymb,amsfonts,amsthm}
\usepackage{graphicx}
\usepackage{subfigure}
\usepackage{float}
\usepackage{xcolor}
\usepackage[numbers,square,sort&compress]{natbib}
\usepackage{hyperref}
  \hypersetup{colorlinks,citecolor=blue,linkcolor=blue,breaklinks=true}
\usepackage{booktabs}
\usepackage{colortbl}
\usepackage{caption}
\usepackage{enumitem}
\usepackage{epstopdf}
\usepackage{algorithm}
\usepackage{algpseudocode}
\usepackage{array}
\usepackage{bbding}
\usepackage{fancyhdr} % 页眉
\usepackage{fancyvrb} % 摘录 Verbatim 环境
\usepackage{longtable}
\usepackage{listings}
\usepackage{rotating,rotfloat} % 提供 sidewayfigure 和 sidewaystable 环境横排图表
\usepackage{yhmath} % \wideparen 弧 \adots
\usepackage{authblk}
\usepackage{multirow}
\usepackage{braket}
\usepackage{mathrsfs}
\usepackage{amsmath}
\usepackage{mathrsfs}
\usepackage{mathtools}
\graphicspath{{./}{figures/}}  % 插图所在文件夹
\usepackage{fancyhdr}
\newcommand{\mytitle}{Quantification of the Resource Theory of Imaginarity Revisited}

\renewcommand{\sectionmark}[1]{}
\renewcommand{\subsectionmark}[1]{}

\allowdisplaybreaks

\newtheorem{theorem}{Theorem}[section]
\newtheorem{lemma}{Lemma}[section]
\newtheorem{corollary}[theorem]{Corollary}
\newtheorem{proposition}[theorem]{Proposition}
\newtheorem{definition}{Definition}[section]

\newtheorem{remark}{Remark}

\newcommand{\bbm}{\begin{bmatrix}}
\newcommand{\ebm}{\end{bmatrix}}

\begin{document}
\title{\Large Quantification of the Resource Theory of Imaginarity Revisited}

\author{Yue Han}
\author{Naihong Hu%
	\thanks{N.H. acknowledges funding from the Fundamental and
		Interdisciplinary Disciplines Breakthrough Plan of the Ministry
		of Education of China (JYB2025XDXM112), and the Science and
		Technology Commission of Shanghai Municipality
		(Grant No.~22DZ2229014).}}

\affil{\small School of Mathematical Sciences, MOE-KLMEA \&
	SH-KLPMMP, East China Normal University,\\
	Minhang Campus, 500 Dongchuan Road, Shanghai 200241,
	P.~R.~China\\
	\textit{E-mail addresses:}
	\texttt{52280155013@stu.ecnu.edu.cn},
	\texttt{nhhu@math.ecnu.edu.cn}}
\date{}
\maketitle
\begin{abstract}
This work studies the decay of three imaginarity measures, namely the $l_1$-norm‑based imaginarity, the robustness of imaginarity, and the relative‑entropy‑based imaginarity, under typical noise channels. For arbitrary single‑qubit pure states subject to real quantum channels, we prove an exact attenuation theorem: the $l_1$-norm‑based imaginarity and the robustness exhibit identical attenuation governed by a common channel‑dependent factor, whereas the relative‑entropy‑based imaginarity additionally depends on the input‑state orientation relative to the reference basis. We extend the investigation to two‑qubit entangled and dual‑rail states, establishing a quantitative link between photon‑loss probability and residual imaginarity for photonic dual‑rail encodings. For separable two‑qubit states, we define a real‑operation resource preorder and show that $\Omega_{++}=|+\rangle\langle+|\otimes|+\rangle\langle+|$ is operationally maximal. We further distinguish operational de‑imaginary power from reference‑state imaginarity loss. For several real two‑qubit channels we derive exact operational de‑imaginary powers, while for other channels we provide analytic reference‑state losses serving as benchmarks and rigorous lower bounds.
\end{abstract}

\noindent\textbf{Mathematics Subject Classification 2020:}
81P45, 81P47, 81P68

\noindent\textbf{Keywords:}
Quantum information, Quantum resource theory, Imaginarity measure, Quantum channel

\section{Introduction}
Quantum information technology has experienced rapid development in recent years. As a cutting-edge technology grounded in quantum mechanics, it integrates multiple disciplines and has spurred increasing research attention toward the control and manipulation of microscopic quantum systems. Quantum Resource Theory (QRT) provides a rigorous mathematical framework for quantifying and manipulating quantum mechanical resources — including entanglement \cite{Horodecki09}, superposition  \cite{Theurer20117}, steering \cite{Gallego2015}, magic  \cite{Veitch2014}\cite{Howard2017}, and nonlocality \cite{JI2014} — and thus identifies the information-processing tasks achievable with these resources \cite{Brandão15}\cite{Chitambar19}.

In the traditional framework of quantum mechanics, complex numbers play an important role in any canonical form \cite{WK2012}\cite{Hardy2012}\cite{Aleksandrova13}. Their imaginary parts are not merely a formal mathematical representation but rather a key element with profound physical significance. Based on the imaginary terms present in the density matrix of a quantum state, Hickey and Gour \cite{Hickey18} proposed the resource theory of imaginarity and introduced the concept of an imaginarity measure. Focusing on two specific imaginarity measures — the geometric imaginarity and the robustness of imaginarity — Wu et al. \cite{Wu21}\cite{WKR21} quantified quantum imaginarity and provided an operational interpretation for the quantification of imaginarity in state conversion problems. Quantifying and analyzing imaginarity is crucial for advancing quantum information technology, as the same quantum state often yields different quantitative results when evaluated using different methods under varying physical contexts. Xue et al. \cite{Xue21} further explored the quantification of quantum state imaginarity by introducing two new measures: the weight of imaginarity and the relative entropy of imaginarity, while also investigating the decay of imaginarity when a single-qubit quantum state passes through a quantum channel. More recently, Chen et al. \cite{Chen23} proposed imaginarity measures based on the $l_{1}$ norm and convex functions, and studied the changes in the order of single-qubit quantum states after transmission through a quantum channel.

Since the imaginary part of a quantum state’s density matrix only appears in its off-diagonal elements, the quantum resource theory of imaginarity is closely associated with the quantum resource theory of coherence \cite{Winter16}\cite{Chitambar16}\cite{Streltsov17}\cite{Mani15}. For instance, measures based on relative entropy were first proposed and studied within the framework of coherence resource theory by Baumgratz et al. \cite{Baumgratz14}. As a transmission channel for quantum information, quantum channels play a pivotal role in fields such as quantum computing and quantum communication. They can be employed to transmit quantum keys \cite{Ekert91} and realize quantum teleportation \cite{Bennett93}, among other applications. Notably, the imaginarity of a quantum state is often altered when it passes through a quantum channel; thus, studying the impact of quantum channels on quantum imaginarity is of great theoretical and practical significance. By exploring the role of the imaginarity resource in quantum channels, we can facilitate the development of more secure quantum communication systems.

In this paper, we first study how different imaginarity measures decay for arbitrary single‑qubit pure states under dephasing, generalized amplitude damping, and phase‑amplitude damping channels. We consider the $l_1$-norm‑based imaginarity measure, the robustness of imaginarity, and the relative entropy of imaginarity. We establish an exact attenuation theorem for arbitrary real single‑qubit quantum channels, showing that the $l_1$-norm‑based imaginarity and the robustness of imaginarity are equal and undergo exactly the same channel‑dependent attenuation. For the relative entropy of imaginarity, we show that the initial resource is determined by the parameter $A=|\langle\psi^*|\psi\rangle|$, whereas its decay under a fixed basis‑dependent channel generally requires an additional parameter describing the orientation of the input state relative to the reference basis.

We then extend the state‑decay analysis to representative two‑qubit entangled states and dual‑rail states under two‑qubit noise channels. For a photonic dual‑rail qubit subject to symmetric amplitude damping, we interpret the damping process as photon loss and show that the vacuum component lies outside the logical code space and can be identified as an erasure event. This provides a concrete physical application of the analytical decay formulas and a direct quantitative criterion for the preservation of imaginarity under photon loss.

For separable two‑qubit systems, we introduce a real‑operation preorder and an operational notion of a maximal separable state. We prove that the product state $\Omega_{++}=|+\rangle\langle+|\otimes|+\rangle\langle+|$ is operationally maximal and use it as a common, locally preparable reference state. On this basis, we distinguish three related channel quantities: the imaginary power on separable states, the operational de‑imaginary power, and the imaginarity loss evaluated on the fixed reference state $\Omega_{++}$. For the real channels considered in this work, the imaginary power is zero. We obtain exact operational de‑imaginary powers, based on the robustness and relative entropy of imaginarity, for the two‑qubit phase‑damping, phase‑flip, and bit‑flip channels. For the amplitude‑damping, phase‑amplitude‑damping, bit‑phase‑flip, and depolarizing channels, we calculate analytically tractable reference‑state losses, which provide common benchmarks and rigorous lower bounds on the corresponding operational de‑imaginary powers.

\section{Preliminaries}

\subsection{Framework of the Resource Theory of Imaginarity}
We review the fundamentals of the resource theory of imaginarity with reference to Refs. \cite{Hickey18}\cite{Wu21}\cite{WKR21}. From the basic assumptions of quantum mechanics: any isolated physical system is associated with a complex inner product vector space (i.e., a Hilbert space) called the state space of the system, and the system is completely described by a state vector, which is a unit vector in the system's state space. Now, we consider a quantum system associated with a $d$-dimensional complex Hilbert space $\mathscr{H}$, and $\{\ket{j}\}_{j=0}^{d-1}$ is an orthonormal basis on $\mathscr{H}$. Let $\rho$ be a density operator on the Hilbert space $\mathscr{H}$. The set of all density operators on the Hilbert space $\mathscr{H}$ is denoted by $\mathscr{D}(\mathscr{H})$.

In the standard framework of the theory of quantum resources, free states and free operations are the two core fundamental concepts. Free states are defined as quantum states carrying no target resource, while free operations refer to quantum operations that cannot generate the target resource from scratch. In the resource theory of imaginarity, the target resource corresponds to the imaginary part of quantum states, and hence the free states are real states. When the density operator $\rho$ is a real matrix in the given basis $\{\ket{j}\}_{j=0}^{d-1}$, we call $\rho$ a real state (free state). The set of all real states is defined as
\begin{align}
	\mathscr{R} = \{\rho \in \mathscr{D}(\mathscr{H}): \bra{i}\rho\ket{j} \in \mathbb{R}, \; i,j=0,1,\dots,d-1\}. 
\end{align}
From the above definition, it follows that $\rho \in \mathscr{R}$ if and only if $\rho$ is a real symmetric matrix, i.e., $\rho = \rho^{T}$.

In the resource theory of imaginarity, the free operations are incapable of creating imaginarity ex nihilo. Only transformation and redistribution of pre‑existing imaginarity resources are permitted, and no state possessing imaginary parts can be derived from a purely real state by means of free operations, and hence the free operations are real operations. If the Kraus operators $\{K_{m}\}$ of a quantum operation $\Lambda$ (i.e., $\Lambda(\cdot) = \sum_{m} K_{m} (\cdot) K_{m}^{\dagger}$) satisfy $\sum_{m} K_{m}^{\dagger}K_{m} = I$ and $\bra{i}K_{m}\ket{j} \in \mathbb{R}$ for all $m, i, j$, then $\Lambda$ is called a real operation (free operation). This definition is natural, meaning that any real operation acting on any real state will not produce an imaginary state. And the existence of the operator-sum representation is guaranteed by the following theorem \cite{Nielsen10}.
\begin{theorem}
	(Existence of Operator-Sum Representation) All quantum operations $\varepsilon$ on a Hilbert space of dimension $d$ can be generated by an operator-sum representation with at most $d^{2}$ elements,
\begin{align}
		\varepsilon(\rho) \equiv \sum_{k=0}^{M-1} E_{k}\rho E_{k}^{\dagger}, 
\end{align}
	where $1 \le M \le d^{2}$, each $E_{k}$ is a bounded linear operator acting on the d-dimensional Hilbert space and satisfies $\sum_{k} E_{k}^{\dagger}E_{k} \le I$, the equality holds if and only if the quantum operation $\varepsilon$ is trace-preserving. 
\end{theorem}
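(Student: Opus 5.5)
The plan is the standard Stinespring/Choi route, restricted to completely positive maps on the $d$-dimensional space $\mathscr{H}$. A quantum operation $\varepsilon$ is a linear, completely positive map with $\operatorname{tr}\varepsilon(\rho)\le 1$. The key object will be its Choi operator.

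\textbf{Step 1: the Choi operator.} Fix the basis $\{\ket{j}\}_{j=0}^{d-1}$ and the unnormalized maximally entangled vector $\ket{\Phi}=\sum_j \ket{j}\otimes\ket{j}$. Define
\begin{align}
J=(\mathrm{id}\otimes\varepsilon)(\ket{\Phi}\bra{\Phi})=\sum_{i,j}\ket{i}\bra{j}\otimes\varepsilon(\ket{i}\bra{j}).
\end{align}
Complete positivity gives $J\ge 0$. Since $J$ acts on a $d^2$-dimensional space, it has a spectral decomposition $J=\sum_{k=0}^{M-1}\ket{v_k}\bra{v_k}$ with $M=\operatorname{rank}J\le d^2$. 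Here the eigenvectors are rescaled by the square roots of the nonzero eigenvalues. If $J=0$ we take the single operator $E_0=0$, so $M\ge 1$ always.

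\textbf{Step 2: reading off the Kraus operators.} For each $k$ define $E_k$ by $E_k\ket{j}=(\bra{j}\otimes I)\ket{v_k}$. Then $\sum_k E_k\ket{i}\bra{j}E_k^\dagger$ equals the $(i,j)$ block of $J$, which is $\varepsilon(\ket{i}\bra{j})$. Both sides are linear and agree on the matrix units, so $\varepsilon(\rho)=\sum_k E_k\rho E_k^\dagger$ for every $\rho$.

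\textbf{Step 3: the trace condition.} For all $\rho\ge 0$ we have $\operatorname{tr}\rho\ge\operatorname{tr}\varepsilon(\rho)=\operatorname{tr}\bigl(\sum_k E_k^\dagger E_k\,\rho\bigr)$. Taking $\rho=\ket{\psi}\bra{\psi}$ gives $\bra{\psi}(I-\sum_k E_k^\dagger E_k)\ket{\psi}\ge 0$ for every $\psi$, so $\sum_k E_k^\dagger E_k\le I$.

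\textbf{Step 4: the equality case.} If $\sum_k E_k^\dagger E_k=I$, then $\operatorname{tr}\varepsilon(\rho)=\operatorname{tr}\rho$ for every $\rho$, so $\varepsilon$ is trace-preserving. Conversely, if $\varepsilon$ is trace-preserving, then $\operatorname{tr}\bigl((I-\sum_k E_k^\dagger E_k)\rho\bigr)=0$ for all density matrices $\rho$. Density matrices span all operators, so the Hermitian operator $I-\sum_k E_k^\dagger E_k$ must vanish.

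The only delicate point is Step 2: checking the block identity with consistent indexing and conjugation conventions. Everything else is routine linear algebra. Boundedness of the $E_k$ is automatic, since $\mathscr{H}$ is finite-dimensional.
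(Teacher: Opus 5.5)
Your proof is correct. The paper gives no proof of its own and simply cites Nielsen--Chuang, and your construction is the one used there. You apply $\mathrm{id}\otimes\varepsilon$ to the unnormalized maximally entangled vector, split the resulting positive operator into rank-one pieces, and read the Kraus operators off its $d\times d$ blocks. Your block identity in Step~2 checks out with your conventions. Steps~3--4 are the usual trace argument.

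The only real difference is how you get $M\le d^{2}$. The textbook allows an arbitrary decomposition $\sigma=\sum_i \ket{s_i}\bra{s_i}$ in the existence proof. It then derives the bound separately from the unitary freedom of Kraus representations: it diagonalizes the Gram matrix $W_{jk}=\operatorname{tr}(E_j^{\dagger}E_k)$, whose rank is at most the dimension $d^{2}$ of the operator space on $\mathscr{H}$. You instead take the spectral decomposition of $J$, so the count $M=\operatorname{rank}J\le d^{2}$ comes directly from the construction.

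Your route is shorter for this statement. The textbook's route additionally shows that any given Kraus family can be compressed to at most $d^{2}$ elements by a unitary recombination.
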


There is a very important state in the theory of imaginarity called the maximal imaginary state:
\begin{align}
	\ket{+} = \frac{1}{\sqrt{2}}(\ket{0} + i\ket{1}), 
\end{align}
which can be converted to any quantum state of any dimension through real operations. The state $\ket{-} = \frac{1}{\sqrt{2}}(\ket{0} - i\ket{1})$ is also a maximal imaginary state \cite{Hickey18}.

In the framework of the theory of imaginarity, an imaginarity measure $\mathscr{F}: \mathscr{D}(\mathscr{H}) \to [0, +\infty)$ must satisfy the following conditions (1) and (2) \cite{Hickey18}:

(1) Non-negativity: $\mathscr{F}(\rho) \ge 0$, and $\mathscr{F}(\rho) = 0$ if and only if $\rho \in \mathscr{R}$;

(2) Monotonicity: Let $\Lambda$ be a real operation. Then $\mathscr{F}(\Lambda(\rho)) \le \mathscr{F}(\rho)$.

Additionally, some other properties can be discussed:

(3) Strong monotonicity: $\mathscr{F}(\rho) \ge \sum_{j} p_{j}\mathscr{F}(\rho_{j})$, where $p_{j} = \mathrm{Tr}[K_{j}\rho K_{j}^{\dagger}]$, $\rho_{j} = K_{j}\rho K_{j}^{\dagger}/p_j$, and $K_{j}$ are real Kraus operators;

(4) Convexity: $\sum_{j} p_{j}\mathscr{F}(\rho_{j}) \ge \mathscr{F}(\sum_{j} p_{j}\rho_{j})$;

(5) Block additivity: $\mathscr{F}(p\rho_{1} \oplus (1-p)\rho_{2}) = p\mathscr{F}(\rho_{1}) + (1-p)\mathscr{F}(\rho_{2})$.

We will now introduce several well-defined imaginarity measures.

(I) The $l_{1}$-norm-based imaginarity measure \cite{Chen23} can be expressed as
\begin{align}
	 \mathscr{F}_{l_{1}}(\rho) = \min_{\sigma \in \mathscr{R}} \|\rho - \sigma\|_{l_{1}} = \sum_{i \ne j} |\mathrm{Im}(\rho_{ij})|,
\end{align}
where $\rho$ is any quantum state, $\sigma$ is a real state, and $\mathrm{Im}(\rho_{ij})$ represents the imaginary part of the matrix element $\rho_{ij}$.

(II) In quantum resource theories, The robustness of imaginarity is the minimum amount of arbitrary-state admixture required to transform a state into the free-state set. The robustness of imaginarity \cite{Wu21} is defined as
\begin{align}
	\mathscr{F}_{R}(\rho) = \min_{\tau} \left\{s \ge 0: \frac{\rho + s\tau}{1+s} \in \mathscr{R}, \tau \in \mathscr{D}(\mathscr{H})\right\} = \frac{1}{2}\|\rho - \rho^{T}\|_{1},
\end{align}
where $\mathscr{D}(\mathscr{H})$ denotes the set of density operators on the Hilbert space $\mathscr{H}$, $T$ denotes the transpose, and $\|A\|_{1} = \mathrm{Tr}\sqrt{A^{\dagger}A}$ is the trace norm.

(III) The weight-based imaginarity measure \cite{Bu18} is defined as
\begin{align}
	\mathscr{F}_{w}(\rho) = \min_{\sigma, \tau} \left\{s \ge 0: \rho = (1-s)\sigma + s\tau, \sigma \in \mathscr{R}, \tau \in \mathscr{D}(\mathscr{H})\right\}.
\end{align}
Note that in convex resource theories, the weight-based quantifier is closely related to tasks involving exclusion.

(IV) Based on quantum relative entropy, we define the relative entropy of imaginarity for a quantum state $\rho$ as \cite{Xue21}:
\begin{align}\label{the relative entropy of imaginarity}
	 \mathscr{F}_{r}(\rho) = \min_{\sigma \in \mathscr{R}} S(\rho \| \sigma) = S(\mathrm{Re}(\rho)) - S(\rho),
\end{align}
where $S(\rho \| \sigma) = -S(\rho) - \mathrm{Tr}(\rho\log{\sigma}) = \mathrm{Tr}(\rho\log{\rho}) - \mathrm{Tr}(\rho\log{\sigma})$ is the quantum relative entropy, $S(\rho) = -\mathrm{Tr}(\rho\log{\rho})$ is the von Neumann entropy, and $\mathrm{Re}(\rho) = \frac{1}{2}(\rho + \rho^{T})$. All logarithms are taken to base $2$.

\begin{remark}
Within the resource theory of imaginarity, free states are defined as quantum states whose density matrices are real matrices under a fixed orthonormal basis $\{\ket{j}\}_{j=0}^{d-1}$: \begin{align} \rho \in \mathscr{R} \Longleftrightarrow \rho = \rho^{T}. \end{align}
	Here, the transpose operation $T$, the imaginary part of matrix entries \(\mathrm{Im}(\rho_{ij})\), and the concept of real state are all dependent on the selected basis. In other words, a state that qualifies as a real state under one basis may no longer be a real state under another unitarily transformed basis.

For example, we take the single-qubit state: $\ket{+} = \frac{1}{\sqrt{2}}(\ket{0} + i\ket{1})$. This state is a maximal imaginary state under the computational basis \(\{\ket{0},\ket{1}\}\). However, if we adopt a new basis: \begin{align} \ket{e_{0}}=\frac{1}{\sqrt{2}}(\ket{0} + i\ket{1}),\;\;\; \ket{e_{1}}=\frac{1}{\sqrt{2}}(\ket{0} - i\ket{1}) \end{align} the identical physical state $\ket{+}$ corresponds to basis vector \(|e_0\rangle\) in the new basis, with its density matrix written as:$\begin{bmatrix}1&0 \\0&0\end{bmatrix}$ which carries no imaginarity at all. This example directly illustrates that imaginarity is not an invariant under general unitary transformations, but a resource defined relative to a fixed reference basis.

Based on the above discussion, all free states, free operations and imaginarity measures in this paper are defined with respect to a pre-fixed computational basis. For real orthogonal transformations $O$ (i.e. all matrix elements of $O$ are real), real states map to real states (since $\rho=\rho^T \implies O\rho O^T=(O\rho O^T)^T$), and imaginarity measures remain invariant or preserve their structure. For general complex unitary transformations U, the set of real states transforms to $U\rho U^\dagger$. The original structure of free states and free operations will be altered, so the numerical values of imaginarity and its decay formulas will generally change accordingly. Therefore, the results of this paper are not basis-independent, but defined relative to physically meaningful reference bases, such as the computational basis, energy eigenbasis, path basis or polarization basis.
\end{remark}

\section{Decay of Imaginarity of Quantum States After Passing Through Different Quantum Channels}

Similar to quantum entanglement and coherence, quantum imaginarity can also be destroyed as the quantum state evolves. In this section, we mainly explore the changes in different imaginarity measures of a pure state after it passes through a quantum noise channel. 

Here we use the decay of an imaginarity measure to quantify the effect of a quantum noise channel. Let $\varepsilon$ be a quantum channel. The decay of imaginarity is defined as the difference between the initial imaginarity and the imaginarity after passing through the quantum noise channel:
\begin{align}
	\Delta I = \mathscr{F}(\rho) - \mathscr{F}(\varepsilon(\rho)), 
\end{align}
where $\rho$ is a pure state and $\mathscr{F}$ is an imaginarity measure.
All noise channels considered in this section are real operations in the fixed computational basis. Hence, by monotonicity under free operations, $\mathscr F(\varepsilon(\rho))\leq\mathscr F(\rho)$, and therefore $\Delta I\geq0$.

\subsection{Single-Qubit Case}\label{sec:single-qubit-case}

For any pure state $\ket{\psi}$, there exists a real orthogonal matrix $O$ such that 
\begin{align}
	O\ket{\psi }=\sqrt{\frac{1+\left | \left \langle \psi ^{\ast }  | \psi  \right \rangle  \right | }{2} }\ket{0}+i\sqrt{\frac{1-\left | \left \langle \psi ^{\ast }  | \psi  \right \rangle  \right | }{2} }\ket{1},
\end{align}
which keeps the imaginarity unchanged \cite{Wu21}, where $\ket{\psi^{\ast}}$ represents the complex conjugate of $\ket{\psi}$.

We can consider a canonical representative of a pure state under a real orthogonal transformation:
\begin{align}\label{a pure state under a real orthogonal transformation}
	 \ket{\gamma} = \sqrt{\frac{1 + |\left \langle \psi^{\ast} | \psi \right \rangle|}{2}}\ket{0} + i\sqrt{\frac{1 - |\left \langle \psi^{\ast} | \psi \right \rangle|}{2}}\ket{1} = \sqrt{\frac{1+A}{2}}\ket{0} + i\sqrt{\frac{1-A}{2}}\ket{1},
\end{align}
where $A = |\left \langle \psi^{\ast} | \psi \right \rangle|$. The density matrix corresponding to $\ket{\gamma}$ is
\begin{align}
	\ket{\gamma}\bra{\gamma} = \begin{bmatrix}
		\frac{1+A}{2} & -\frac{\sqrt{1-A^{2}}}{2}i \\
		\frac{\sqrt{1-A^{2}}}{2}i & \frac{1-A}{2}
	\end{bmatrix}. 
\end{align}

Consider the dephasing ($D$), generalized amplitude damping ($GAD$) \cite{Nielsen10} and phase-amplitude damping ($PAD$) channels given by the following Kraus operators:
\begin{equation}
\begin{aligned}
	E_{0}^{D}&=\begin{bmatrix}
		\sqrt{1-\frac{p}{2}}   & 0\\
		0 &\sqrt{1-\frac{p}{2}} 
	\end{bmatrix},\qquad
	E_{1}^{D}=\begin{bmatrix}
		\sqrt{\frac{p}{2}}   & 0\\
		0 &-\sqrt{\frac{p}{2}} 
	\end{bmatrix};
\end{aligned}
\end{equation}
\begin{equation}
	\begin{aligned}
		E_{0}^{GAD} &= \sqrt{p_{1}}\begin{bmatrix}
			1 & 0 \\
			0 & \sqrt{1-p_{2}}
		\end{bmatrix} ,
	\qquad	E_{1}^{GAD} = \sqrt{p_{1}}\begin{bmatrix}
			0 & \sqrt{p_{2}} \\
			0 & 0
		\end{bmatrix} ,\\
		E_{2}^{GAD}& = \sqrt{1-p_{1}}\begin{bmatrix}
			\sqrt{1-p_{2}} & 0 \\
			0 & 1
		\end{bmatrix} ,
		E_{3}^{GAD} = \sqrt{1-p_{1}}\begin{bmatrix}
			0 & 0 \\
			\sqrt{p_{2}} & 0
		\end{bmatrix} ;
	\end{aligned}
\end{equation}
\begin{equation}
	\begin{aligned}
		E_{0}^{PAD} &= \begin{bmatrix}
			1 & 0 \\
			0 & \sqrt{1-p_{1}-p_{2}}
		\end{bmatrix} , 
		E_{1}^{PAD} = \begin{bmatrix}
			0 & \sqrt{p_{2}} \\
			0 & 0
		\end{bmatrix} ,
		E_{2}^{PAD} =\begin{bmatrix}
			0 & 0 \\
			0 & \sqrt{p_{1}}
		\end{bmatrix}.
	\end{aligned}
\end{equation}
Under the $D$, $GAD$, $PAD$ channels, the pure state $\ket{\gamma}\bra{\gamma}$ is transformed
into the following forms:
\begin{equation}
	\begin{aligned}
			\varepsilon_{D}(\ket{\gamma}\bra{\gamma}) &= \begin{bmatrix}
			\frac{1+A}{2} & -\frac{(1-p)\sqrt{1-A^{2}}}{2}i \\
			\frac{(1-p)\sqrt{1-A^{2}}}{2}i & \frac{1-A}{2}
		\end{bmatrix},\\
		\varepsilon_{GAD}(\ket{\gamma}\bra{\gamma}) &= \begin{bmatrix}
			\frac{1 + p_{2}(2p_{1}-1) + A(1-p_{2})}{2} & -\frac{\sqrt{(1-p_{2})(1-A^{2})}}{2}i \\
			\frac{\sqrt{(1-p_{2})(1-A^{2})}}{2}i & \frac{1 - p_{2}(2p_{1}-1) - A(1-p_{2})}{2}
		\end{bmatrix},\\
		\varepsilon_{PAD}(\ket{\gamma}\bra{\gamma}) &= \begin{bmatrix}
			\frac{1+A+p_{2}(1-A)}{2} & -\frac{\sqrt{(1-p_{1}-p_{2})(1-A^{2})}}{2}i \\
			\frac{\sqrt{(1-p_{1}-p_{2})(1-A^{2})}}{2}i & \frac{(1-p_{2})(1-A)}{2}
		\end{bmatrix}
	\end{aligned}
\end{equation}
Of particular note, when $p_{1}=1$, the generalized amplitude damping channel degenerates to the standard amplitude damping channel. In the phase-amplitude damping channel, $p_{1}$ is the probability parameter of phase damping, representing the probability of phase loss; $p_{2}$ is the parameter of amplitude damping, representing the probability of the decay from the excited state to the ground state. When $p_{1}=0$, the phase-amplitude damping channel degenerates into an amplitude damping channel, and when $p_{2}=0$, it degenerates into a phase damping channel.

\begin{remark}
Eq.~\eqref{a pure state under a real orthogonal transformation} identifies a canonical representative of a pure state under a real orthogonal transformation. Since both $O$ and $O^\mathrm{T}$ are real operations, an imaginarity measure takes the same value on $\rho$ and $O\rho O^\mathrm{T}$. This invariance, however, concerns the input state only. The quantum channels considered below are defined with respect to the fixed computational basis and are not, in general, covariant under an arbitrary real orthogonal transformation. Consequently, $\mathscr F(\varepsilon(O\rho O^{T}))$ need not be equal to $\mathscr F(\varepsilon(\rho))$. Therefore, in order to analyze arbitrary pure input states, an additional parameter describing the orientation of the state relative to the preferred basis of the channel must be retained.
\end{remark}

As established below, the canonical state in Eq.~\eqref{a pure state under a real orthogonal transformation} can be used to calculate the decay of $\mathscr F_{l_{1}}$ and $\mathscr F_{R}$ for an arbitrary single-qubit input state, even though the channel is not required to be covariant under real orthogonal transformations.

\begin{theorem} \label{thm:real-qubit-attenuation} Let $\varepsilon(\cdot ) = \sum_{m} K_{m} (\cdot)  K_{m}^{\dagger}$ be a real single-qubit quantum channel, where $K_{m}\in\mathbb{R}^{2\times 2}$ and $\sum_{m} K_{m}^{\dagger}K_{m} = I_{2}$. Define $\eta_{\varepsilon} := \sum_m\det K_m$. Then, for every single-qubit density operator $\rho$, we have $\mathscr F_{l_{1}}(\rho)=\mathscr F_{R}(\rho)$, and
		\begin{align}\mathscr F_{l_{1}}\!\left(\varepsilon(\rho)\right) = |\eta_{\varepsilon}|\mathscr F_{l_{1}}(\rho), \qquad  \mathscr F_{R}\!\left(\varepsilon(\rho)\right) = |\eta_{\varepsilon}|\mathscr F_{R}(\rho).    \end{align}
		Moreover, $|\eta_{\varepsilon}|\leq 1$. Consequently, \begin{align} \Delta I_{l_{1}}^{\varepsilon}(\rho) = \Delta I_R^{\varepsilon}(\rho) = \left(1-|\eta_{\varepsilon}|\right)\mathscr F_{l_{1}}(\rho). \end{align} 
		In particular, if $\rho=|\psi\rangle\langle\psi|$ is a pure state and $A=|\langle\psi^{\ast}|\psi\rangle|$,  then \begin{align} \mathscr F_{l_{1}}(\rho)=\mathscr F_R(\rho)=\sqrt{1-A^{2}}, \end{align} and hence \begin{align} \Delta I_{l_{1}}^{\varepsilon} = \Delta I_R^{\varepsilon} = \left(1-|\eta_{\varepsilon}|\right)\sqrt{1-A^{2}}. \end{align} 
\end{theorem}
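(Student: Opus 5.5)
The plan is to isolate the imaginary part of a qubit state and track how a real channel acts on it. Write any single-qubit density operator as $\rho=\mathrm{Re}(\rho)+i\,\mathrm{Im}(\rho)$. Here $\mathrm{Re}(\rho)=\frac12(\rho+\rho^{T})$ is real symmetric and $\mathrm{Im}(\rho)$ is real antisymmetric. For a qubit the antisymmetric part is a multiple of $J:=\begin{bmatrix}0&-1\\1&0\end{bmatrix}$, so $\rho=\mathrm{Re}(\rho)+ i\,c\,J$ with $c=\mathrm{Im}(\rho_{10})\in\mathbb{R}$.

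The first claim, $\mathscr F_{l_1}(\rho)=\mathscr F_R(\rho)$, follows directly. We have $\mathscr F_{l_1}(\rho)=|\mathrm{Im}\rho_{01}|+|\mathrm{Im}\rho_{10}|=2|c|$. Also $\rho-\rho^{T}=2icJ$, and $J$ is unitary up to sign (its singular values are both $1$), so $\mathscr F_R(\rho)=\frac12\|2icJ\|_1=2|c|$.

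The key step is to show that a real channel maps $\mathrm{Re}(\rho)$ to a real symmetric matrix and $J$ to $\eta_\varepsilon J$. The first part holds because each $K_m$ is real, so $K_m\,\mathrm{Re}(\rho)\,K_m^{T}$ is real symmetric. For the second part, use the $2\times 2$ identity $KJK^{T}=(\det K)\,J$. This holds because $KJK^{T}$ is antisymmetric, hence a multiple of $J$, and comparing the $(1,0)$ entries gives the factor $k_{00}k_{11}-k_{01}k_{10}$. Since $K_m^\dagger=K_m^{T}$ for real $K_m$, summing over $m$ gives
\begin{align}
\varepsilon(\rho)=\sum_m K_m\mathrm{Re}(\rho)K_m^{T}+ i\,c\,\eta_\varepsilon J .
\end{align}
So the imaginary coordinate of $\varepsilon(\rho)$ is $c\,\eta_\varepsilon$. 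Applying the formulas of the first paragraph to $\varepsilon(\rho)$ yields $\mathscr F_{l_1}(\varepsilon(\rho))=\mathscr F_R(\varepsilon(\rho))=2|c||\eta_\varepsilon|$, which is the stated attenuation law. No covariance of the channel is needed, since the identity holds for every real $K_m$.

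The main obstacle is the bound $|\eta_\varepsilon|\le1$, which I plan to prove without relying on monotonicity. Apply the decomposition above to the maximally imaginary state $\ket{+}\bra{+}=\frac12(I+iJ')$, with the appropriate sign convention for $J$. Its image has off-diagonal imaginary part $\frac12\eta_\varepsilon$. Positivity of the $2\times2$ output $\sigma$ forces $|\sigma_{01}|\le\sqrt{\sigma_{00}\sigma_{11}}\le\frac12$, hence $|\eta_\varepsilon|\le1$.

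As a cross-check, Hadamard's inequality gives $|\det K_m|\le\|K_me_0\|\|K_me_1\|$. Cauchy--Schwarz then gives $\sum_m\|K_me_0\|\|K_me_1\|\le\big(\sum\|K_me_0\|^2\big)^{1/2}\big(\sum\|K_me_1\|^2\big)^{1/2}=1$, because $\sum_m K_m^TK_m=I$.

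The decay formula $\Delta I=(1-|\eta_\varepsilon|)\mathscr F_{l_1}(\rho)$ is then immediate. For a pure state, either use the real orthogonal invariance of both measures from the preceding remark together with the canonical form $\ket{\gamma}\bra{\gamma}$, whose imaginary off-diagonal entry is $\frac{\sqrt{1-A^2}}{2}$, giving $2|c|=\sqrt{1-A^2}$. Alternatively, compute directly that $|\mathrm{Im}(\psi_0^*\psi_1)|^2=\frac14(1-|\langle\psi^*|\psi\rangle|^2)$ for normalized $\psi$.
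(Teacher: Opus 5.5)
Your proposal is correct and follows the paper's proof essentially step for step. It uses the same splitting $\operatorname{Im}\rho=cJ$ (the paper's $c=y/2$), the same identity $KJK^{T}=(\det K)J$ and the same singular-value argument for $\mathscr F_R$, and your ``cross-check'' via Hadamard's inequality and Cauchy--Schwarz is exactly the paper's proof of $|\eta_\varepsilon|\le 1$. The only deviations are cosmetic. Your primary bound via positivity of $\varepsilon(|+\rangle\langle+|)$ is valid: $|+\rangle\langle+|=\tfrac12(I+iJ)$ gives an output imaginary entry $\tfrac12\eta_\varepsilon$, while $|\sigma_{01}|\le\tfrac12$. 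Your direct identity $|\operatorname{Im}(\bar\psi_0\psi_1)|^2=\tfrac14(1-A^2)$ stands in for the paper's $A^2=\operatorname{Tr}(\rho\rho^{T})=x^2+z^2$.
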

The detailed proof of Theorem~\ref{thm:real-qubit-attenuation} is provided in Appendix~\ref{app:proof-real-qubit-attenuation}. The theorem shows that, for a single-qubit system, the $\ell_1$-norm-based imaginarity and the robustness of imaginarity are not only equal for the input state, but are also attenuated by exactly the same channel-dependent factor under any real quantum channel. This provides a direct comparison between these two imaginarity measures 
and considerably simplifies their decay analysis.

The exact attenuation theorem immediately yields the following corollary, which justifies the use of the canonical representative in Eq.~\eqref{a pure state under a real orthogonal transformation} when evaluating the decay of $\mathscr F_{l_{1}}$ and $\mathscr F_{R}$.

\begin{corollary} \label{cor:canonical-state-reduction} Let $\varepsilon$ be a real single-qubit quantum channel, and let $O\in O(2)$ be a real orthogonal matrix. Then, for every single-qubit state $\rho$ and for $K\in\{l_1,R\}$, $\mathscr F_K\!\left(\varepsilon(O\rho O^T)\right) = \mathscr F_K\!\left(\varepsilon(\rho)\right)$. Consequently, $\Delta I_K^{\varepsilon}(O\rho O^T) = \Delta I_K^{\varepsilon}(\rho)$, where $K\in\{l_1,R\}$. 
\end{corollary}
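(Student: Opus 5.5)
The corollary follows almost directly from Theorem~\ref{thm:real-qubit-attenuation}, applied twice. The plan is to first show that $\mathscr F_K(O\rho O^T)=\mathscr F_K(\rho)$ for $K\in\{l_1,R\}$. Then the attenuation identity $\mathscr F_K(\varepsilon(\sigma))=|\eta_\varepsilon|\,\mathscr F_K(\sigma)$, which holds for \emph{every} single-qubit state $\sigma$, can be applied to both $\sigma=O\rho O^T$ and $\sigma=\rho$. The key point is that the factor $\eta_\varepsilon=\sum_m\det K_m$ depends only on the channel, not on the input, so no covariance of $\varepsilon$ under $O$ is required.

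For the invariance step, I would argue directly. Write $\rho-\rho^T = 2i\,\mathrm{Im}(\rho)$, where $\mathrm{Im}(\rho)$ is real antisymmetric. For a qubit this forces $\rho-\rho^T = 2i\,y\,J$ with $J=\begin{bmatrix}0&-1\\1&0\end{bmatrix}$ and $y\in\mathbb R$. Since $O$ is real, $(O\rho O^T)^T=O\rho^T O^T$, and hence
\begin{align}
O\rho O^T-(O\rho O^T)^T = O(\rho-\rho^T)O^T = 2iy\,OJO^T = 2iy\det(O)\,J .
\end{align}
Here I use the identity $OJO^T=\det(O)J$ for $O\in O(2)$, which holds because $J$ spans the one-dimensional space of real $2\times2$ antisymmetric matrices and $\det$ acts on it by the determinant. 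Since $\det O=\pm1$, both the trace norm $\tfrac12\|\cdot\|_1$ and the sum of absolute off-diagonal imaginary parts are unchanged. This gives $\mathscr F_R(O\rho O^T)=\mathscr F_R(\rho)$ and $\mathscr F_{l_1}(O\rho O^T)=\mathscr F_{l_1}(\rho)$. Alternatively, one can cite monotonicity under the real operations $\rho\mapsto O\rho O^T$ and its inverse $\rho\mapsto O^T\rho O$, as noted in the remark following Eq.~\eqref{a pure state under a real orthogonal transformation}.

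Combining the two steps gives
\begin{align}
\mathscr F_K(\varepsilon(O\rho O^T)) = |\eta_\varepsilon|\,\mathscr F_K(O\rho O^T) = |\eta_\varepsilon|\,\mathscr F_K(\rho) = \mathscr F_K(\varepsilon(\rho)).
\end{align}
The statement about $\Delta I_K^\varepsilon$ then follows by subtracting this identity from the invariance of the input value $\mathscr F_K(O\rho O^T)=\mathscr F_K(\rho)$.

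I do not expect any genuine obstacle, since the heavy lifting is already in Theorem~\ref{thm:real-qubit-attenuation}. The only point needing care is to stress that the theorem is being applied to $O\rho O^T$ with the \emph{same} channel $\varepsilon$. This is legitimate because the theorem is stated for every density operator, and it is exactly the reason the earlier remark's caveat about the non-covariance of $\varepsilon$ does not obstruct this corollary.
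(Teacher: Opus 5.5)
Your proposal is correct and follows essentially the same route as the paper: first the invariance $\mathscr F_K(O\rho O^T)=\mathscr F_K(\rho)$, then Theorem~\ref{thm:real-qubit-attenuation} applied to the same channel $\varepsilon$ on both inputs with the input-independent factor $|\eta_\varepsilon|$, and finally subtraction to get the statement about $\Delta I_K^{\varepsilon}$. The only cosmetic difference is in the invariance step: the paper gets it by applying Theorem~\ref{thm:real-qubit-attenuation} to the single-Kraus channel $\mathcal{O}(\cdot)=O(\cdot)O^T$ with $|\eta_{\mathcal O}|=|\det O|=1$, whereas you prove it directly from $OJO^T=(\det O)J$, which is the same identity underlying the theorem.
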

The proof of Corollary~\ref{cor:canonical-state-reduction} is given in Appendix~\ref{app:proof-canonical-state-reduction}. Therefore, for the $l_1$-norm-based imaginarity and the robustness of imaginarity, the canonical state introduced in Eq.~\eqref{a pure state under a real orthogonal transformation} can be used without loss of generality when evaluating the decay of an arbitrary single-qubit pure state under a real quantum channel.

In particular, let $\rho_{\psi}=|\psi\rangle\langle\psi|$  be an arbitrary single-qubit pure state, and let $\rho_{\gamma} = |\gamma\rangle\langle\gamma| = O\rho_{\psi}O^T$ be its canonical representative in Eq.~\eqref{a pure state under a real orthogonal transformation}. Corollary \ref{cor:canonical-state-reduction} implies that $\mathscr F_K\!\left(\varepsilon(\rho_{\gamma})\right) = \mathscr F_K\!\left(\varepsilon(\rho_{\psi})\right)$ and $\Delta I_K^{\varepsilon}(\rho_{\gamma}) = \Delta I_K^{\varepsilon}(\rho_{\psi})$, where $K\in\{l_1,R\}$. Therefore, the canonical state in Eq.~\eqref{a pure state under a real orthogonal transformation} can be used without loss of generality to evaluate the decay of $\mathscr F_{l_1}$ and $\mathscr F_R$ for an arbitrary single-qubit pure input state. Notice that this result does not require the covariance relation  $\varepsilon(O\rho O^T) = O\varepsilon(\rho)O^T$. In general, the two output states $\varepsilon(O\rho O^T)$ and $\varepsilon(\rho)$ need not be equal or related by the same orthogonal transformation. The corollary states only that their $\mathscr F_{l_1}$ and $\mathscr F_R$ values, and consequently their corresponding imaginarity decays, are equal.

Since the $D$, $GAD$, $PAD$ channels are real single-qubit quantum channels, Theorem~\ref{thm:real-qubit-attenuation} and Corollary~\ref{cor:canonical-state-reduction} apply directly. Substituting the corresponding attenuation coefficients into the
general decay formula, we obtain
\begin{equation}
		\begin{aligned}
			&	\Delta I_{l_{1} }^{D}=\Delta I_{R }^{D}=p\sqrt{1-A^{2} },\\
			&	\Delta I_{l_{1} }^{GAD}= \Delta I_{R }^{GAD}=(1-\sqrt{1-p_{2} } ) \sqrt{1-A^{2} },\\
			&	\Delta I_{l_{1} }^{PAD}=\Delta I_{R }^{PAD}=(1-\sqrt{1-p_{1}-p_{2} } ) \sqrt{1-A^{2} },
		\end{aligned}
\end{equation}
These expressions hold for an arbitrary single-qubit pure input state and are independent of the orientation of the state within its real-orthogonal orbit.

We next consider the relative entropy of imaginarity $\mathscr{F}_r$. The canonical-state reduction established above is specific to $\mathscr{F}_{\ell_1}$ and $\mathscr{F}_R$. Although a real orthogonal transformation preserves the input relative entropy of imaginarity, $\mathscr{F}_r(O\rho O^{\mathrm T})=\mathscr{F}_r(\rho)$, the equality $\mathscr{F}_r\!\left(\varepsilon(O\rho O^{\mathrm T})\right)=\mathscr{F}_r\!\left(\varepsilon(\rho)\right)$ does not hold in general. Consequently, the parameter $A=|\langle\psi^*|\psi\rangle|$ determines the initial relative entropy of imaginarity of a pure state, but it does not completely determine its decay under a fixed basis-dependent channel. To retain a description of arbitrary pure input states, we introduce an additional orientation parameter $\vartheta$.

To parameterize the entire pure-state Bloch sphere, we first retain the sign of the $y$-component. An arbitrary single-qubit pure state can be written as \begin{equation} \label{eq:general-pure-qubit-chi} \rho_{\chi}(A,\vartheta) = \frac{1}{2} \begin{pmatrix} 1+A\cos\vartheta & A\sin\vartheta-i\chi\sqrt{1-A^2}\\[1mm] A\sin\vartheta+i\chi\sqrt{1-A^2} & 1-A\cos\vartheta \end{pmatrix}, \end{equation} 
where $A\in[0,1], \vartheta\in[0,2\pi), \chi\in\{+1,-1\}$. The corresponding Bloch vector is \begin{align} \boldsymbol r_{\chi}(A,\vartheta) = \left( A\sin\vartheta,\, \chi\sqrt{1-A^2},\, A\cos\vartheta \right). \end{align} Its squared length is \begin{align}  \left\| \boldsymbol r_{\chi}(A,\vartheta) \right\|^2 = A^2\sin^2\vartheta + (1-A^2) + A^2\cos^2\vartheta=1,  \end{align} and hence Eq.~\eqref{eq:general-pure-qubit-chi} represents a pure state.

The two choices of $\chi$ are related by transposition: $\rho_{-\chi}(A,\vartheta) = \rho_{\chi}(A,\vartheta)^{\mathrm T}$. For every real quantum channel  $\varepsilon(\rho) = \sum_m K_m\rho K_m^{\mathrm T}$, where $K_m\in\mathbb R^{2\times2}$,  one has $\varepsilon(\rho^{\mathrm T}) = \varepsilon(\rho)^{\mathrm T}$.  Moreover, the relative entropy of imaginarity is invariant under transposition. Indeed, $\operatorname{Re}(\sigma^{\mathrm T}) = \operatorname{Re}(\sigma) $ and $\sigma^{\mathrm T}$ has the same eigenvalues as $\sigma$, so that \begin{align} \mathscr{F}_r(\sigma^{\mathrm T}) = S\!\left( \operatorname{Re}(\sigma^{\mathrm T}) \right) - S(\sigma^{\mathrm T}) = S\!\left( \operatorname{Re}(\sigma) \right) - S(\sigma) = \mathscr{F}_r(\sigma). \end{align}  It follows that \begin{align} \mathscr{F}_r\!\left( \varepsilon(\rho_{-\chi}(A,\vartheta)) \right)= \mathscr{F}_r\!\left( \varepsilon(\rho_{\chi}(A,\vartheta)^{\mathrm T}) \right)= \mathscr{F}_r\!\left( \varepsilon(\rho_{\chi}(A,\vartheta))^{\mathrm T} \right)= \mathscr{F}_r\!\left( \varepsilon(\rho_{\chi}(A,\vartheta)) \right). \end{align}  Since the corresponding input quantities are also equal, the two choices of $\chi$ give the same decay: \begin{align} \Delta I_r^\varepsilon \bigl(\rho_{-\chi}(A,\vartheta)\bigr) = \Delta I_r^\varepsilon \bigl(\rho_{\chi}(A,\vartheta)\bigr). \end{align}  We may therefore set $\chi=+1$ without loss of generality and write  $\rho(A,\vartheta) := \rho_{+}(A,\vartheta)$. Therefore, any single-qubit pure state can be described, up to transposition, by
\begin{equation}
	\label{eq:general-pure-qubit}
	\rho(A,\vartheta)
	=
	\frac{1}{2}
	\begin{pmatrix}
		1+A\cos\vartheta
		&
		A\sin\vartheta-i\sqrt{1-A^2}\\[1mm]
		A\sin\vartheta+i\sqrt{1-A^2}
		&
		1-A\cos\vartheta
	\end{pmatrix},
\end{equation}
where $A\in[0,1], \vartheta\in[0,2\pi)$. Its Bloch vector is $\boldsymbol{r}(A,\vartheta)=\left(A\sin\vartheta,\,\sqrt{1-A^2},\,A\cos\vartheta\right)$. Since $A^2\sin^2\vartheta+(1-A^2)+A^2\cos^2\vartheta=1$, the state in Eq.~\eqref{eq:general-pure-qubit} is pure.

To express the relative entropy of imaginarity in a compact form, we define the binary entropy function
\begin{align}\label{h_2(q)}
h_2(q)=-q\log{q}-(1-q)\log{(1-q)},
\end{align}
with the convention $0\log{0}:=0$, and introduce
\begin{align}\label{H_t}
\mathcal{H}(t)=h_2\left(\frac{1+t}{2}\right),\qquad0\leq t\leq1.
\end{align}
For a single-qubit state with Bloch vector $(x,y,z)$, the eigenvalues
of the state are $\frac{1\pm\sqrt{x^2+y^2+z^2}}{2}$, whereas the eigenvalues of its real part are $\frac{1\pm\sqrt{x^2+z^2}}{2}$. It follows from the Eq.~\eqref{the relative entropy of imaginarity}
that
\begin{equation}
	\label{eq:relative-entropy-bloch}
	\mathscr{F}_r(\rho)
	=
	\mathcal{H}\left(\sqrt{x^2+z^2}\right)
	-
	\mathcal{H}\left(\sqrt{x^2+y^2+z^2}\right).
\end{equation}
For the pure state $\rho(A,\vartheta)$, we have $\sqrt{x^2+z^2}=A$ and $\sqrt{x^2+y^2+z^2}=1$. Since $\mathcal{H}(1)=0$, the initial relative entropy of imaginarity is $\mathscr{F}_r\!\left(\rho(A,\vartheta)\right)=\mathcal{H}(A)$, which is independent of $\vartheta$. Nevertheless, the relative entropy of imaginarity of the output state generally depends on $\vartheta$.

The three channels considered above transform the Bloch vector in the
common affine form
\begin{align}
x'=\lambda_{\varepsilon}x,\qquad y'=\lambda_{\varepsilon}y,\qquad z'=\mu_{\varepsilon}z+\nu_{\varepsilon}
\end{align}
where
\begin{equation}\label{eq:channel-bloch-parameters}
	\begin{array}{c|ccc}
		\varepsilon
		&
		\lambda_{\varepsilon}
		&
		\mu_{\varepsilon}
		&
		\nu_{\varepsilon}
		\\
		\hline
		D
		&
		1-p
		&
		1
		&
		0
		\\[1mm]
     	GAD
		&
		\sqrt{1-p_2}
		&
		1-p_2
		&
		p_2(2p_1-1)
		\\[1mm]
		PAD
		&
		\sqrt{1-p_1-p_2}
		&
		1-p_2
		&
		p_2
	\end{array}
\end{equation}
For the $\mathrm{PAD}$ channel, the parameters satisfy $p_1\geq0, p_2\geq0, p_1+p_2\leq1$. For the pure input state $\rho(A,\vartheta)$, define
\begin{equation}
	u_{\varepsilon}(A,\vartheta)=\sqrt{\lambda_{\varepsilon}^{2}A^{2}\sin^{2}\vartheta+\left(\mu_{\varepsilon}A\cos\vartheta+\nu_{\varepsilon}\right)^{2}},
\end{equation}
\begin{equation}
	v_{\varepsilon}(A,\vartheta)
	=
	\sqrt{
		u_{\varepsilon}(A,\vartheta)^{2}
		+
		\lambda_{\varepsilon}^{2}(1-A^{2})
	}.
\end{equation}
Here, $u_{\varepsilon}(A,\vartheta)$ is the length of the Bloch vector associated with the real part of the output state, whereas $v_{\varepsilon}(A,\vartheta)$ is the length of the full output Bloch vector. Therefore,
\begin{equation}
	\mathscr{F}_r\!\left(\varepsilon\!\left(\rho(A,\vartheta)\right)\right)=\mathcal{H}\!\left(u_{\varepsilon}(A,\vartheta)\right)-\mathcal{H}\!\left(v_{\varepsilon}(A,\vartheta)\right).
\end{equation}
Consequently, the decay of the relative entropy of imaginarity for an arbitrary single-qubit pure input state is
\begin{equation}\label{eq:general-relative-entropy-decay}
\Delta I_r^{\varepsilon}(A,\vartheta)=\mathcal{H}(A)-\mathcal{H}\!\left(u_{\varepsilon}(A,\vartheta)\right)+\mathcal{H}\!\left(v_{\varepsilon}(A,\vartheta)\right).
\end{equation}
The canonical state in Eq.~\eqref{a pure state under a real orthogonal transformation} corresponds to $\vartheta=0$. In this case,
\begin{align}
u_{\varepsilon}(A,0)=\left|\mu_{\varepsilon}A+\nu_{\varepsilon}\right|,\quad v_{\varepsilon}(A,0)=\sqrt{\left(\mu_{\varepsilon}A+\nu_{\varepsilon}\right)^{2}+\lambda_{\varepsilon}^{2}(1-A^{2})}.
\end{align}
Therefore, by setting $\vartheta=0$ in Eq.~\eqref{eq:general-relative-entropy-decay}, substituting the parameters of the $D$, $GAD$, and $PAD$ channels, we obtain the following explicit expressions:
\begin{equation}
\begin{aligned}
			\Delta I_{r}^{D}
			=\mathcal H(\sqrt{A^2+(1-p)^2(1-A^2)}),
\end{aligned}
\end{equation}			
\begin{equation}
\begin{aligned}
			\Delta I_{r}^{GAD}=\mathcal H(A)+\mathcal H\!\left(\sqrt{\alpha }\right)
			-\mathcal H(\left | (1-p_2)A+p_2(2p_1-1) \right |),
\end{aligned}
\end{equation}
\begin{equation}
\begin{aligned}			
			\Delta I_{r}^{PAD}=\mathcal H(A)+\mathcal H(\sqrt{\beta})-\mathcal H(p_2+(1-p_2)A),
\end{aligned}
\end{equation}
where $\alpha=(p_{2}{-}2p_{1}p_{2}{-}(1{-}p_{2})A)^{2} {+}(1{-}p_{2})(1{-}A^{2} )$, $\beta=1{-}p_{2}(1{-}p_{2})(1{-}A)^{2}{-}p_{1}(1{-}A^{2} )$, and the function $\mathcal{H}(\cdot)$ is given by Eq.~\eqref{H_t}.

\begin{figure}[H]
	\centering
	\includegraphics[width=0.9\linewidth]{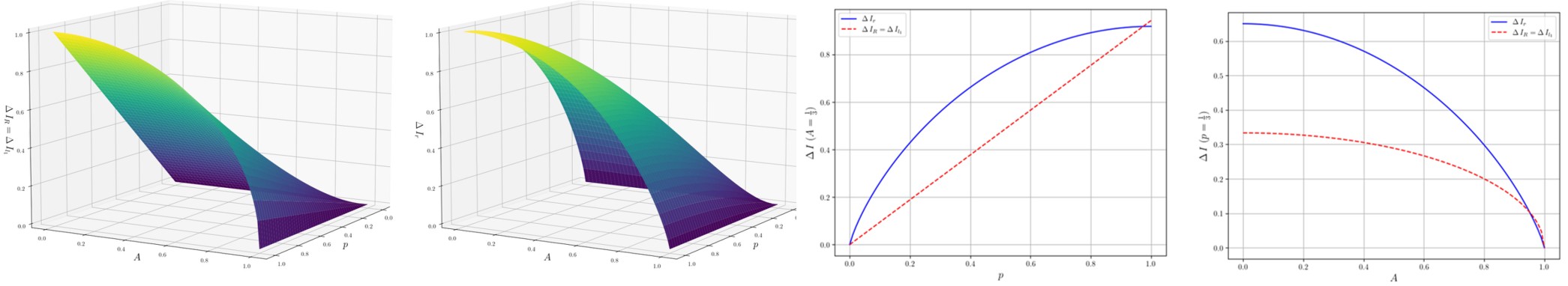}
	\caption{Decay of $\mathscr{F}_{l_{1}}$, $\mathscr{F}_{R}$, and $\mathscr{F}_{r}$ under a Dephasing Channel}
	\label{fig:1}
\end{figure}
\begin{figure}[H]
	\centering
	\includegraphics[width=0.8\linewidth]{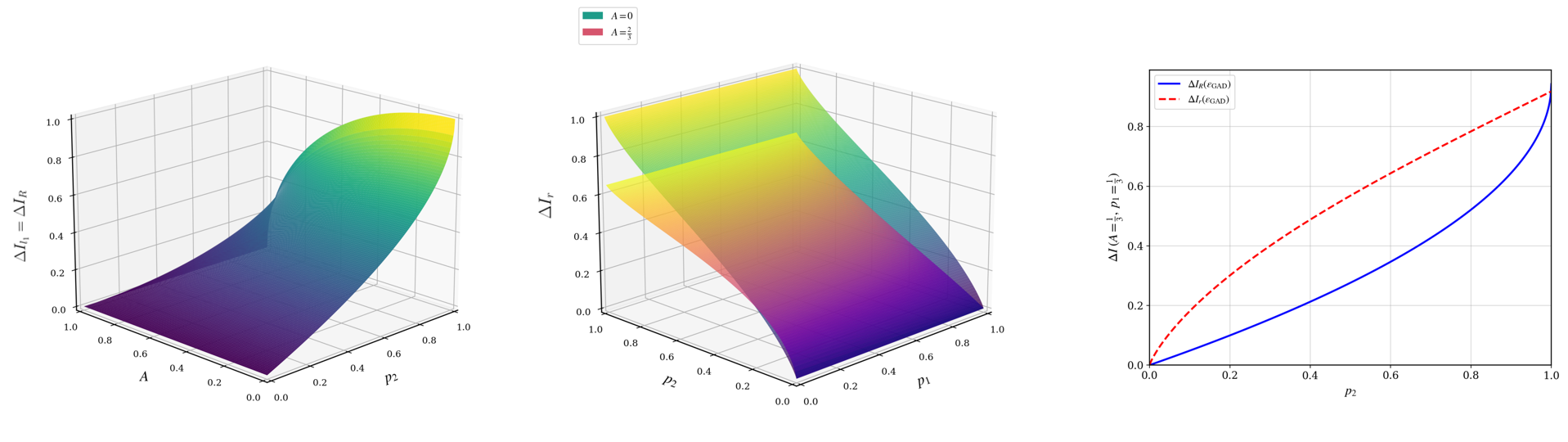}
	\caption{Decay of $\mathscr{F}_{l_{1}}$, $\mathscr{F}_{R}$, and $\mathscr{F}_{r}$ under a Generalized Amplitude Damping Channel}
	\label{fig:2}
\end{figure}
\begin{figure}[H]
	\centering
	\includegraphics[width=0.8\linewidth]{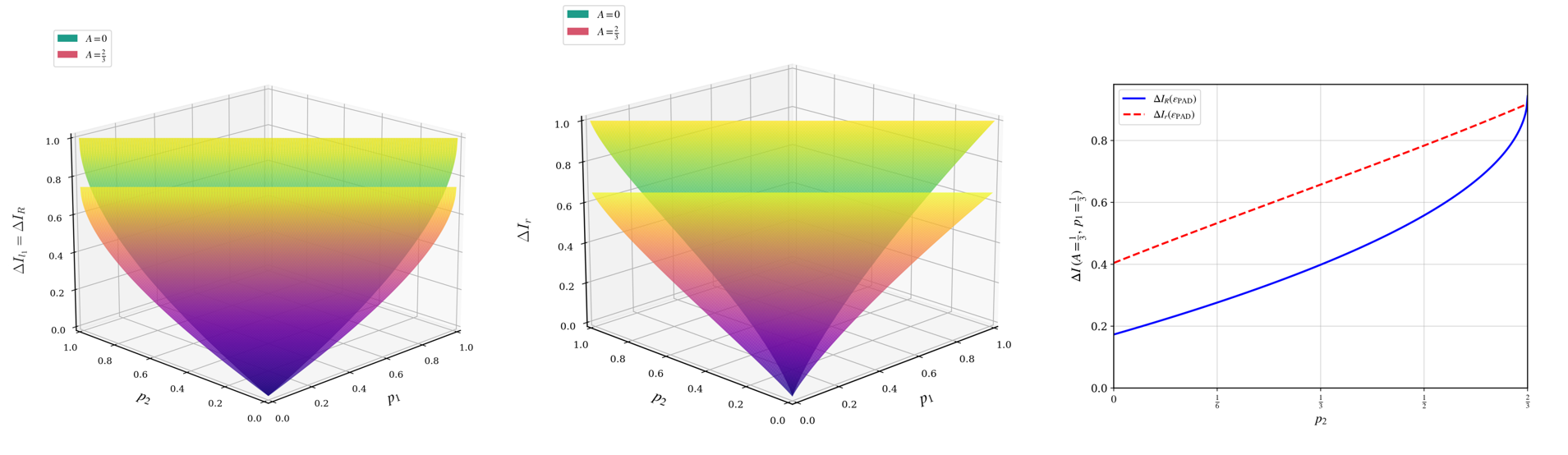}
	\caption{Decay of $\mathscr{F}_{l_{1}}$, $\mathscr{F}_{R}$, and $\mathscr{F}_{r}$ under a Phase-Amplitude Damping Channel}
	\label{fig:3}
\end{figure}

Figure~\ref{fig:1} summarizes the decay of the three imaginarity measures under the dephasing channel for $0\leq p\leq 1$. The $l_1$-norm-based decay and the robustness-based decay coincide and are independent of the real-orthogonal orientation of the input state. For the relative-entropy-based decay, we present the canonical case $\vartheta=0$. For any fixed $A<1$, all three decay quantities vanish at $p=0$ and increase with $p$. In particular, $\Delta I_{l_1}^{D}=\Delta I_R^{D}$ depends linearly on $p$, whereas $\Delta I_r^{D}$ is a concave increasing function of $p$. For any fixed $p>0$, all three quantities decrease concavely with $A$, attaining their maxima at $A=0$ and vanishing at $A=1$. The two cross sections illustrate the different functional dependence of the measures on the channel parameter and the input-state imaginarity.

Figure~\ref{fig:2} summarizes the decay of the three imaginarity measures under the generalized amplitude-damping channel. The $l_1$-norm-based and robustness-based decays coincide. We can see that these two decay quantities are independent of $p_1$, increase convexly with $p_2$ for fixed $A<1$, and decrease concavely with $A$ for fixed $p_2>0$. For the relative-entropy-based decay, the figure shows the canonical case $\vartheta=0$. Unlike the first two measures, $\Delta I_r^{GAD}(A,0)$ generally depends on both $p_1$ and $p_2$. This can be seen directly from the output state: $p_1$ enters its diagonal elements through the term $p_2(2p_1-1)$, whereas the magnitude of the imaginary off-diagonal element depends only on $p_2$. Consequently, changing $p_1$ can affect $\Delta I_r^{GAD}$, but does not affect $\Delta I_{l_1}^{GAD}$ or $\Delta I_R^{GAD}$. At $p_2=0$, the three decay quantities vanish, whereas at $p_2=1$ the output state is real and the decay of each measure equals its initial value. The right panel shows the cross section $A=p_1=1/3$, along which all three decay quantities increase monotonically with $p_2$.

Figure~\ref{fig:3} presents the corresponding results for the phase-amplitude-damping channel over the physical parameter region $p_1\geq0$, $p_2\geq0$, and $p_1+p_2\leq1$. The $l_1$-norm-based and robustness-based decays coincide. Thus, these two quantities depend only on the total damping parameter $p_1+p_2$. For a fixed $A$, all three decays vanish at $(p_1,p_2)=(0,0)$ and attain their maximal values on the boundary $p_1+p_2=1$. The corresponding maxima are $\sqrt{1-A^2}$ for $\Delta I_{l_1}^{\mathrm{PAD}}=\Delta I_R^{\mathrm{PAD}}$ and $\mathcal H(A)$ for $\Delta I_r^{\mathrm{PAD}}$. Unlike the first two measures, the relative-entropy-based decay generally depends on $p_1$ and $p_2$ separately. Along the cross-section $A=p_1=1/3$ shown in the right panel, all three decays increase monotonically with $p_2$. The $l_1$-norm-based and robustness-based decays are convex, whereas the relative-entropy-based decay changes curvature and is therefore neither globally convex nor globally concave over the displayed interval.

\subsection{Two-Qubit Case}

Here we will discuss the decay of imaginarity for different measures of some two-qubit states after passing through a quantum channel. The two state families considered below are supported on two-dimensional subspaces and may therefore be regarded as logical single-qubit states. We use the input-state parametrization and the input imaginarity relations established in Section~\ref{sec:single-qubit-case} in the corresponding logical bases, while treating the action of each physical two-qubit channel separately.

\subsubsection[Decay under a Two-Qubit Bit-Flip Channel]{Decay of Imaginarity of the Pure State $\ket{\gamma}=\alpha\ket{00}+\beta\ket{11}$ under a Two-Qubit Bit-Flip Channel}

For the pure state $\ket{\gamma} = \alpha\ket{00} + \beta\ket{11}$, where $\alpha$ and $\beta$ are complex numbers satisfying $|\alpha|^{2} + |\beta|^{2} = 1$, the state $\ket{\gamma}$ is an entangled state when both $\alpha$ and $\beta$ are non-zero. It can be written as
\begin{align}
	\ket{\gamma} = \cos\frac{\theta}{2}\ket{00} + e^{i\phi}\sin\frac{\theta}{2}\ket{11}, 
\end{align}
where $\theta \in [0, \pi]$ and $\phi \in [0, 2\pi]$. Its corresponding density matrix is
\begin{align}
 \rho=\ket{\gamma } \bra{\gamma}=\frac{1}{2}\begin{bmatrix}\begin{smallmatrix}
 		1+\cos\theta &0  & 0 & e^{-i\phi}\sin\theta\\
 		0 & 0 &0  & 0\\
 		0 & 0 &0  &0 \\
 		e^{i\phi}\sin\theta  &0 &0  &1 - \cos\theta
 \end{smallmatrix}\end{bmatrix}.
\end{align}
For the single-qubit pure state $\ket{\psi} = \cos\frac{\theta}{2}\ket{0} + e^{i\phi}\sin\frac{\theta}{2}\ket{1}$, there exists an orthogonal matrix $O$ such that
\begin{align}
	O\ket{\psi} = \sqrt{\frac{1+A}{2}}\ket{0} + i\sqrt{\frac{1-A}{2}}\ket{1},
\end{align}
where $A = |\left \langle \psi^{\ast} | \psi \right \rangle|$ \cite{Wu21}. Let the orthogonal matrix be $O = \begin{bmatrix} a & b \\ c & d \end{bmatrix}$. Then
\begin{align}
 O\ket{\psi}\bra{\psi}O^{T} = \frac{1}{2} \begin{bmatrix} a & b \\ c & d \end{bmatrix} \begin{bmatrix} 1+\cos\theta & e^{-i\phi}\sin\theta \\ e^{i\phi}\sin\theta & 1-\cos\theta \end{bmatrix} \begin{bmatrix} a & c \\ b & d \end{bmatrix} = \begin{bmatrix} \frac{1+A}{2} & -\frac{\sqrt{1-A^{2}}}{2}i \\ \frac{\sqrt{1-A^{2}}}{2}i & \frac{1-A}{2} \end{bmatrix}. 
\end{align}
Therefore, for the $\rho$ above, there exists an orthogonal matrix $O' = \begin{bmatrix}\begin{smallmatrix} a & 0 & 0 & b \\ 0 & 1 & 0 & 0 \\ 0 & 0 & 1 & 0 \\ c & 0 & 0 & d \end{smallmatrix}\end{bmatrix}$ such that
\begin{align}
	 O'\rho O'^{T} = \begin{bmatrix}\begin{smallmatrix}
	\frac{1+A}{2} & 0 & 0 & -\frac{\sqrt{1-A^{2}}}{2}i \\
	0 & 0 & 0 & 0 \\
	0 & 0 & 0 & 0 \\
	\frac{\sqrt{1-A^{2}}}{2}i & 0 & 0 & \frac{1-A}{2}
\end{smallmatrix}\end{bmatrix}. 
\end{align}

Identifying $\ket{00}$ and $\ket{11}$ as the logical basis states $\ket{0_L}$ and $\ket{1_L}$, respectively, the state in Eq.~\eqref{eq:even-parity-canonical-state} is the canonical orientation $\vartheta=0$ of the pure-state parametrization introduced in Section~\ref{sec:single-qubit-case}. The input-state imaginarity relations derived there therefore apply in this logical subspace.

However, the physical two-qubit bit-flip channel can populate both the even- and odd-parity sectors. Its action on the output imaginarity must therefore be checked separately. As shown below, the decay of $\mathscr F_{l_1}$ and $\mathscr F_R$ is independent of the logical orientation, whereas the relative-entropy expression in Eq.~\eqref{eq:bf-relative-entropy-canonical} refers specifically to the canonical orientation $\vartheta=0$.

For the explicit calculation, we use the canonical state
\begin{equation}
	\label{eq:even-parity-canonical-state}
	\lvert\gamma\rangle=\sqrt{\frac{1+A}{2}}\lvert00\rangle+i\sqrt{\frac{1-A}{2}}\lvert11\rangle.
\end{equation}

The Kraus operators of the two-qubit bit-flip channel are given in Appendix~\ref{app:kraus-operators}; see Eq.~\eqref{eq:BF}. In the parameter convention used there, $p_i$ is the no-flip probability of the $i$th qubit. After passing through the channel, the canonical state becomes
\begin{align}\label{eq:bf-canonical-output} 
	\varepsilon _{BF} (\ket{\gamma}\bra{\gamma})=\begin{bmatrix}\begin{matrix}
			\frac{1-\alpha+(p_{1}+p_{2}-1)A}{2}  &  0&0  &\frac{(1-p_{1}-p_{2} )\sqrt{1-A^{2} }}{2}i   \\
			0&  \frac{\alpha+(p_{1}-p_{2})A}{2} &\frac{(p_{2}-p_{1})\sqrt{1-A^{2} }}{2}i   & 0\\
			0& \frac{(p_{1}-p_{2})\sqrt{1-A^{2} }}{2}i & \frac{\alpha+(p_{2}-p_{1})A}{2} &0 \\
			-\frac{(1-p_{1}-p_{2} )\sqrt{1-A^{2} }}{2}i  &0  &0  &\frac{1-\alpha-(p_{1}+p_{2}-1)A}{2}
	\end{matrix}\end{bmatrix}  ,
\end{align}
where\;\;$\alpha=p_{1}+p_{2}-2p_{1}p_{2}$.

Let $y_L$ denote the $y$-component of the logical Bloch vector. For an arbitrary pure state in the logical subspace $\operatorname{span}\{\lvert00\rangle,\lvert11\rangle\}$, the single-qubit relations established in Section~\ref{sec:single-qubit-case} give $\mathscr F_{l_1}(\rho)=\mathscr F_R(\rho)=|y_L|=\sqrt{1-A^2}$.

Under the two-qubit bit-flip channel, the imaginary off-diagonal components are distributed into the even- and odd-parity blocks with multiplicative factors $p_1+p_2-1$ and $p_1-p_2$, respectively. Since the two blocks have orthogonal supports,
\begin{align}
	\mathscr F_{l_1}(\varepsilon_{\mathrm{BF}}(\rho))=\mathscr F_R(\varepsilon_{\mathrm{BF}}(\rho))=\left(|p_1+p_2-1|+|p_1-p_2|\right)\sqrt{1-A^2}.
\end{align}

Consequently,
\begin{align}
\Delta I_{l_1}^{\mathrm{BF}}=\Delta I_R^{\mathrm{BF}}=\left[1-|p_1+p_2-1|-|p_1-p_2|\right]\sqrt{1-A^2}.
\end{align}

The above expression holds for every pure state in the logical subspace and is independent of the logical orientation parameter
$\vartheta$.

For the relative entropy of imaginarity, we retain the canonical input state in Eq.~\eqref{eq:even-parity-canonical-state}, corresponding to $\vartheta=0$. Direct evaluation of the output state in Eq.~\eqref{eq:bf-canonical-output} gives
\begin{equation}\label{eq:bf-relative-entropy-canonical}
		\begin{aligned}
			\bigtriangleup I_{r}=\mathcal H(A)+h_2(p_1)+h_2(p_2)-h_2(\eta)-(1-\eta)\mathcal H\!\left(\frac{|p_1+p_2-1|A}{1-\eta}\right)-\eta
			\mathcal H\!\left(\frac{|p_1-p_2|A}{\eta}\right)
		\end{aligned}
\end{equation}
where $\eta=p_1+p_2-2p_1p_2$, and the function $\mathcal{H}(\cdot)$ and $h_2(\cdot)$ is given by Eqs.~\eqref{H_t} and \eqref{h_2(q)}.

Note that Eq.~\eqref{eq:bf-relative-entropy-canonical} is the canonical‑orientation specialization $\vartheta=0$. For a general logical orientation, the relative‑entropy decay may depend on $\vartheta$.

\begin{figure}[H]
	\centering
	\includegraphics[width=0.9\linewidth]{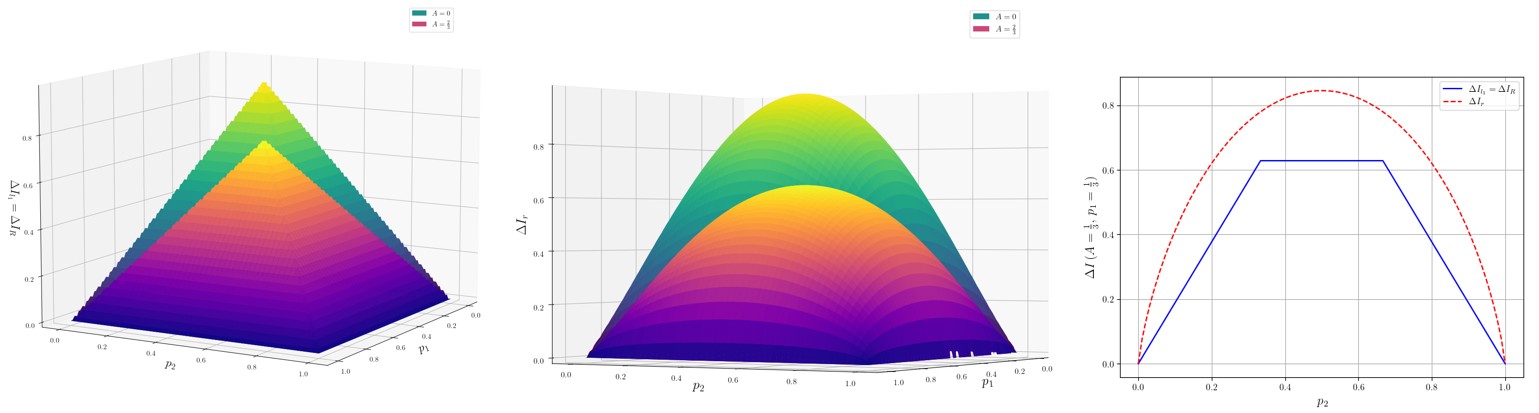}
	\caption{Decay of $\mathscr{F}_{l_{1}}$, $\mathscr{F}_{R}$, and $\mathscr{F}_{r}$ for the pure state $\ket{\gamma} = \alpha\ket{00} + \beta\ket{11}$ under a two-qubit bit-flip channel}
	\label{fig:4}
\end{figure}
Figure~\ref{fig:4} summarizes the imaginarity losses under the two-qubit bit-flip channel, where $p_i$ denotes the no-flip probability of qubit $i$. The left and middle panels show $\Delta I_{\ell_1}^{\mathrm{BF}}=\Delta I_R^{\mathrm{BF}}$ and $\Delta I_r^{\mathrm{BF}}$, respectively, as functions of $(p_1,p_2)$ for $A=0$ and $A=2/3$. The relative-entropy-based results correspond to the canonical orientation $\vartheta=0$. All three loss quantities vanish on the boundary $p_1\in\{0,1\}$ or $p_2\in\{0,1\}$ and attain their global maxima at $p_1=p_2=1/2$. The corresponding maximum values are $\sqrt{1-A^2}$ for $\Delta I_{\ell_1}^{\mathrm{BF}}=\Delta I_R^{\mathrm{BF}}$ and $\mathcal H(A)$ for $\Delta I_r^{\mathrm{BF}}$. Therefore, the maximum losses decrease as $A$ increases. The right panel shows the cross-section $A=p_1=1/3$. On this cross-section, $\Delta I_{\ell_1}^{\mathrm{BF}}=\Delta I_R^{\mathrm{BF}}$ increases linearly on $0\leq p_2\leq1/3$, remains constant on $1/3\leq p_2\leq2/3$, and decreases linearly on $2/3\leq p_2\leq1$. The relative-entropy-based loss is symmetric about $p_2=1/2$ and strictly concave: it increases on $0\leq p_2\leq1/2$, decreases on $1/2\leq p_2\leq1$, and attains its maximum at $p_2=1/2$.

\subsubsection{Decay of Imaginarity of a Dual-Rail Two-Qubit State under an Amplitude Damping Channel}

Suppose a qubit state is represented by two qubits as $\ket{\psi} = \alpha\ket{01} + \beta\ket{10}$, where $\alpha$ and $\beta$ are complex numbers satisfying $|\alpha|^{2} + |\beta|^{2} = 1$. We call this a dual-rail two-qubit state.

Here, we assume that the two rails independently undergo amplitude damping with the same damping probability $\gamma$, so that the overall channel is described by $\varepsilon_{AD} \otimes \varepsilon_{AD}$. So we apply $\varepsilon_{AD} \otimes \varepsilon_{AD}$ to the quantum state $\ket{\psi}$, where $\varepsilon_{AD}$ is the single-qubit amplitude damping channel.
The Kraus operators for the channel $\varepsilon_{AD} \otimes \varepsilon_{AD}$ are given in
Appendix~\ref{app:kraus-operators}; see Eq.~\eqref{eq:AD}. Note that the dual‑rail encoding is an error‑detecting encoding rather than, by itself, a complete quantum error‑correcting code.

Following the single-qubit analysis in Section~\ref{sec:single-qubit-case}, the decay of $\mathscr F_{l_1}$ and $\mathscr F_R$ is independent of the logical orientation, and we use the canonical state 
\begin{align}
	\ket{\psi} = \sqrt{\frac{1+A}{2}}\ket{01} + i\sqrt{\frac{1-A}{2}}\ket{10}. 
\end{align}
The state of $\ket{\psi}\bra{\psi}$ after passing through the amplitude damping channel becomes:
\begin{align}
	\begin{bmatrix}\begin{smallmatrix}
			\gamma & 0 & 0 & 0 \\
			0 & \frac{(1-\gamma)(1+A)}{2} & -\frac{(1-\gamma)\sqrt{1-A^{2}}}{2}i & 0 \\
			0 & \frac{(1-\gamma)\sqrt{1-A^{2}}}{2}i & \frac{(1-\gamma)(1-A)}{2} & 0 \\
			0 & 0 & 0 & 0
	\end{smallmatrix}\end{bmatrix}. 
\end{align}
Then we obtain the following decay expressions for $\mathscr{F}_{l_{1}}(\ket{\psi}\bra{\psi})$ and $\mathscr{F}_{R}(\ket{\psi}\bra{\psi})$ :
\begin{align}\label{dual-rail,l1,R}
	\bigtriangleup I_{l_{1} }=\bigtriangleup I_{R }=\gamma \sqrt{1-A^{2} } 
\end{align}
The above expression holds for every pure state in the logical subspace and is independent of the logical orientation parameter
$\vartheta$.

Similar to Section~\ref{sec:single-qubit-case}, we consider general quantum states parameterized by $\vartheta$. We obtain
\begin{align}
	\rho_{\text{out}}(A,\vartheta)=\begin{bmatrix}\begin{smallmatrix}
		\gamma & 0 & 0 & 0\\
		0 & \dfrac{(1-\gamma)(1+A\cos\vartheta)}{2} & \dfrac{(1-\gamma)\left(A\sin\vartheta-i\sqrt{1-A^2}\right)}{2} & 0\\
		0 & \dfrac{(1-\gamma)\left(A\sin\vartheta+i\sqrt{1-A^2}\right)}{2} & \dfrac{(1-\gamma)(1-A\cos\vartheta)}{2} & 0\\
		0 & 0 & 0 & 0
\end{smallmatrix}\end{bmatrix}
\end{align}
By explicit computation, we obtain the following decay expressions for $\mathscr{F}_{r}(\ket{\psi}\bra{\psi})$:
\begin{align}\label{dual-rail,r}
	\bigtriangleup I_{r}=\gamma \mathcal H(A)
\end{align}
We can see that the decay of $\mathscr{F}_{r}$ is independent of $\vartheta$.
\begin{figure}[H]
	\centering
	\includegraphics[width=0.9\linewidth]{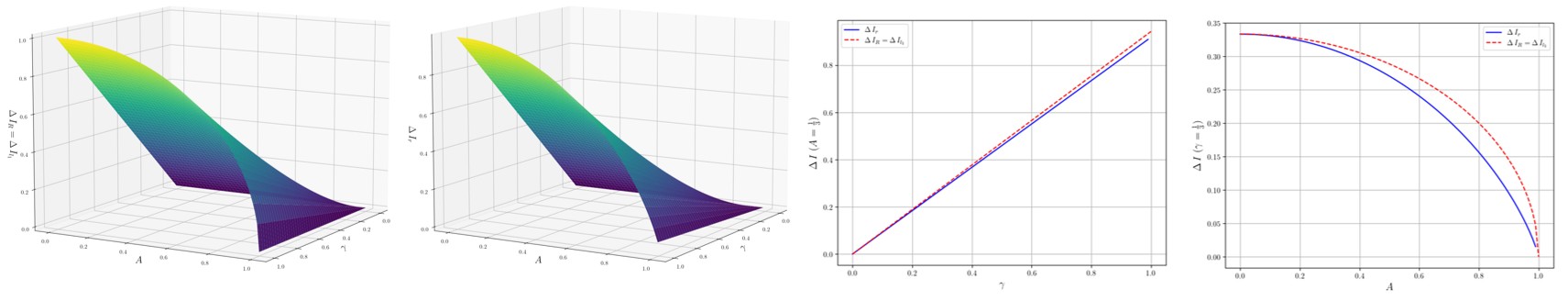}
	\caption{Decay of $\mathscr{F}_{l_{1}}$, $\mathscr{F}_{R}$, and $\mathscr{F}_{r}$ for a dual-rail two-qubit state under an amplitude damping channel}
	\label{fig:5}
\end{figure}
As shown in Figure~\ref{fig:5}, based on the expressions for $\Delta I_{l_1}$, $\Delta I_{R}$, and $\Delta I_{r}$, we plot their respective curves and analyze their variation trends by tuning parameters $A$ and $\gamma$. It can be observed that $\Delta I_{l_1}$, $\Delta I_{R}$, and $\Delta I_{r}$ all attain their maximum values at the maximal imaginary state ($A=0$) and reduce to 0 at the real state ($A=1$). For a fixed parameter $A$ (corresponding to a given quantum state), $\Delta I_{l_1}$, $\Delta I_{R}$, and $\Delta I_{r}$ are monotonically increasing functions of $\gamma$. For a fixed parameter $\gamma$ (corresponding to a given amplitude damping channel), $\Delta I_{l_1}$, $\Delta I_{R}$, and $\Delta I_{r}$ are concave decreasing functions of $A$.

Taken together, Figs.~\ref{fig:1}--\ref{fig:5} show that the $l_1$-norm-based and robustness-based decays coincide for the real single-qubit channels considered here, whereas the relative-entropy-based decay can exhibit a different dependence on the channel and orientation parameters. In the symmetric dual-rail loss model, however, all three measures predict the same fractional loss. Thus, there is no channel- and parameter-independent ordering of the three decay measures.

\subsubsection{Physical Interpretation and Application to a Dual-Rail Photonic Qubit}

To provide a concrete physical interpretation of the above results, we consider a photonic implementation of the dual-rail qubit. The logical basis states are represented by \begin{align}
\ket{0_L} = \ket{01}, \qquad \ket{1_L} = \ket{10}, 
\end{align}
where the two basis states correspond to a single excitation occupying one of two optical modes (rails).

In such a photonic realization, amplitude damping can be used to model single-photon loss. We assume identical and independent loss in the two rails, with $\gamma$ denoting the photon-loss probability. Since a dual-rail logical state contains exactly one excitation, the channel transforms the input state 
$\rho_{\mathrm{in}}$ as \begin{align}
\rho_{\mathrm{out}}
=
(1-\gamma)\rho_{\mathrm{in}}
+
\gamma \ket{00}\bra{00}.
\
\end{align}
The state $\ket{00}$ represents the vacuum state and lies outside the logical code space \begin{align}
\mathcal{C}
=
\operatorname{span}\{\ket{01},\ket{10}\}.
\end{align}
Therefore, a photon-loss event transfers the state outside the code space and can be identified as an erasure event.

For the three imaginarity measures considered in this work, the above 
results in Eqs.~\eqref{dual-rail,l1,R} and \eqref{dual-rail,r} can be expressed in the unified form \begin{align}
\Delta I_K
=
\gamma \mathscr F_K(\rho_{\mathrm{in}}),
\qquad
K\in\{l_1,R,r\},
\end{align}
or equivalently,\begin{align}
	\mathscr F_K(\rho_{\mathrm{out}})
	=
	(1-\gamma)\mathscr F_K(\rho_{\mathrm{in}}).
\end{align}
Therefore, the fraction of imaginarity retained after transmission is $1-\gamma$, while the fraction lost due to photon loss is $\gamma$.

This relation also provides a simple quantitative loss criterion. 
If at least a fraction $q$ of the initial imaginarity is required to remain 
after transmission, then \begin{align}
\mathscr F_K(\rho_{\mathrm{out}})
\geq
q \mathscr F_K(\rho_{\mathrm{in}})
\end{align}
implies \begin{align}
	\gamma \leq 1-q.
\end{align}
Thus, the analytical decay formulas provide a direct constraint on the allowable photon-loss probability for preserving a prescribed fraction of the initial imaginarity resource.

It is worth noting that, conditioned on the no-loss outcome, the normalized state remains exactly $\rho_{\mathrm{in}}$. Hence, the imaginarity of the successfully transmitted dual-rail qubit is fully preserved, while the success probability is $1-\gamma$. The reduction of the unconditional imaginarity therefore originates entirely from the erasure branch.

These results show that the present imaginarity-decay formulas can be used to quantitatively characterize resource degradation caused by photon loss in a dual-rail encoding. We emphasize that dual-rail encoding makes the single-photon-loss event detectable as an erasure; additional redundancy and recovery operations are required for complete quantum error correction.

\section{Operational De-Imaginary Powers and Reference-State Imaginarity Losses of Two-Qubit Channels}\label{sec:two-qubit-channel-analysis}

Quantum channels may generate, preserve, or destroy imaginarity. In this section, we study these abilities for two‑qubit channels acting
on separable states. The channel‑power analysis below is restricted to the robustness of imaginarity $\mathscr{F}_R$ and the relative entropy of imaginarity $\mathscr{F}_r$. Both quantities are treated as resource monotones under real operations in the full two‑qubit Hilbert space.

First, we give the definition of a separable state:
\begin{definition}[Separable state]
	For a composite quantum system consisting of two subsystems $A$ and $B$, with corresponding Hilbert spaces $\mathscr{H}_{A}$ and $\mathscr{H}_{B}$, the Hilbert space for the entire composite system is $\mathscr{H}_{A} \otimes \mathscr{H}_{B}$. If a state $\rho$ in the composite system can be expressed as
\begin{align}
		 \rho_{sep} = \sum_{i} p_{i}\rho_{A}^{i} \otimes \rho_{B}^{i}, 
\end{align}
where $p_{i} \ge 0$, $\sum_{i} p_{i} = 1$, and $\rho_{A}^{i}$ and $\rho_{B}^{i}$ are states in subsystems $A$ and $B$, respectively, then we call $\rho$ a separable state.
\end{definition}

\subsection{Operational Resource Order on Separable States}

Motivated by the conversion-based characterization of the maximally imaginary state in \cite{Hickey18}, we introduce an analogous operational notion restricted to separable two-qubit states.

Let $\mathrm{Sep}$ denote the set of separable two-qubit states.

\begin{definition}[Real-operation preorder]
For $\rho,\sigma\in\mathrm{Sep}$, we write $\rho\succeq_{\mathbb R}\sigma$ if there exists a real completely positive trace-preserving map $\Lambda$ such that $\sigma=\Lambda(\rho)$. Thus, $\rho\succeq_{\mathbb R}\sigma$ means that $\rho$ can be converted into $\sigma$ by a free real operation.
\end{definition}

\begin{definition}[Operationally maximal separable state]
A separable state $\Omega_{\mathrm{sep}}\in\mathrm{Sep}$ is called operationally maximal with respect to real operations if $\Omega_{\mathrm{sep}}\succeq_{\mathbb R}\rho_{\mathrm{sep}}$ for every $\rho_{\mathrm{sep}}\in\mathrm{Sep}$. Equivalently, for every separable state $\rho_{\mathrm{sep}}$, there exists a real quantum channel $\Lambda$ such that $\rho_{\mathrm{sep}}=\Lambda(\Omega_{\mathrm{sep}})$.
\end{definition}
Let $\omega_{+}=|+\rangle\langle+|$, where $|+\rangle=\frac{|0\rangle+i|1\rangle}{\sqrt{2}}$, and define $\Omega_{++}=\omega_{+}\otimes\omega_{+}$.

\begin{theorem}[Operational maximality of $\Omega_{++}$]
	\label{thm:operational-maximal-separable}
	The state $\Omega_{++}=|+\rangle\langle+|\otimes|+\rangle\langle+|$ is operationally maximal in the set of separable two-qubit states.
\end{theorem}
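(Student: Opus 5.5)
The plan is to prove the stronger statement that $\Omega_{++}$ can be converted by a real channel into \emph{every} two-qubit state, separable or not. The channel will be a composition of two real channels. The first discards the second qubit, turning $\Omega_{++}$ into $\omega_+$. The second is a real measure-and-prepare channel from one qubit to two qubits that sends $\omega_+$ to the prescribed target $\rho_{\mathrm{sep}}$. This follows the idea behind the conversion property of the maximally imaginary state from \cite{Hickey18} recalled in the preliminaries, but I will write the channel explicitly so the realness of the Kraus operators can be checked directly.

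\emph{Step 1 (partial trace).} The partial trace over $B$ has Kraus operators $I_2\otimes\langle 0|$ and $I_2\otimes\langle 1|$. These are real, and $\sum_k K_k^{\dagger}K_k = I_4$. Hence $\mathrm{Tr}_B$ is a real channel, and $\mathrm{Tr}_B(\Omega_{++})=\omega_+$.

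\emph{Step 2 (real preparation channel).} Fix an arbitrary two-qubit state $\rho$, and put $\omega_-=|-\rangle\langle-|=\omega_+^{T}$. Define
\begin{align}
\Phi(X)=\mathrm{Tr}(\omega_+X)\,\rho+\mathrm{Tr}(\omega_-X)\,\rho^{T}.
\end{align}
\begin{itemize}
\item Since $\{\omega_+,\omega_-\}$ is a POVM and $\rho,\rho^T$ are states, $\Phi$ is CPTP.
\item Since $\langle +|-\rangle=0$, we get $\Phi(\omega_+)=\rho$.
\end{itemize}

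\emph{Step 3 (realness).} This is the main point. I will compute the Choi matrix:
\begin{align}
J_\Phi=\sum_{i,j}|i\rangle\langle j|\otimes\Phi(|i\rangle\langle j|)=\omega_+^{T}\otimes\rho+\omega_-^{T}\otimes\rho^{T}=\omega_-\otimes\rho+\omega_+\otimes\rho^{T}.
\end{align}
For Hermitian matrices, complex conjugation equals transposition, so $\overline{\omega_\pm}=\omega_\mp$ and $\bar\rho=\rho^T$. Conjugating $J_\Phi$ therefore just swaps its two terms, giving $\overline{J_\Phi}=J_\Phi$. So $J_\Phi$ is a real symmetric positive semidefinite matrix. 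It has a spectral decomposition with real orthonormal eigenvectors $v_k$ and nonnegative eigenvalues $\lambda_k$. Reshaping each $\sqrt{\lambda_k}\,v_k$ into a matrix gives a set of real Kraus operators for $\Phi$. Thus $\Phi$ is a real operation.

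\emph{Step 4 (conclusion).} The composition $\Lambda=\Phi\circ\mathrm{Tr}_B$ is a real CPTP map: products of real Kraus operators are real, and the completeness relation is preserved. It satisfies $\Lambda(\Omega_{++})=\rho$. Taking $\rho=\rho_{\mathrm{sep}}$ for an arbitrary separable state shows $\Omega_{++}\succeq_{\mathbb R}\rho_{\mathrm{sep}}$. Since $\Omega_{++}$ is itself separable, it is operationally maximal in $\mathrm{Sep}$.

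I expect the only real obstacle to be Step 3. The concern is confirming that a channel with a real Choi matrix admits real Kraus operators, and that the index conventions of the Choi isomorphism give exactly the conjugation symmetry claimed. To guard against a convention slip, I will check the realness of $J_\Phi$ entrywise in the computational basis.
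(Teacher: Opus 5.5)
Your proof is correct, and it takes a genuinely different route from the paper's. The paper starts from a separable decomposition $\rho_{\mathrm{sep}}=\sum_i p_i\rho_A^{(i)}\otimes\rho_B^{(i)}$. It quotes the single-qubit conversion property of $\omega_+$ to obtain real local channels with $\Lambda_A^{(i)}(\omega_+)=\rho_A^{(i)}$ and $\Lambda_B^{(i)}(\omega_+)=\rho_B^{(i)}$, and assembles the real Kraus family $\sqrt{p_i}\,K_{A,m}^{(i)}\otimes K_{B,n}^{(i)}$. You instead trace out one qubit and apply a single global measure-and-prepare channel to $\omega_+$, and you verify realness yourself rather than citing it. Your Choi computation is sound and does not depend on convention: with either ordering of the Choi isomorphism, complex conjugation merely swaps the two terms. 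A quicker check is also available. Writing $\omega_\pm=\frac12(I\pm Y)$ gives $\Phi(X)=\mathrm{Tr}(X)\,\mathrm{Re}\,\rho+\mathrm{Tr}(iYX)\,\mathrm{Im}\,\rho$, where $iY$, $\mathrm{Re}\,\rho$ and $\mathrm{Im}\,\rho$ are all real matrices, so $\Phi$ visibly commutes with entrywise conjugation.

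What your route buys is generality and self-containedness. It shows that $\Omega_{++}$, and indeed every $\omega_+\otimes\sigma$, can be converted into every two-qubit state, entangled ones included. So separability of the target plays no role in the maximality. It also places states such as $\omega_+\otimes|0\rangle\langle0|$ in $\mathcal M_{\mathrm{op}}^{\mathrm{sep}}$ directly, whereas the paper's Appendix C needs an extra real orthogonal rotation for this. The paper's route, in exchange, produces a conversion channel that is a convex mixture of product real channels, i.e.\ a local real operation assisted by shared randomness. Such a channel never creates entanglement, which fits the bipartite separable-state setting of that section more naturally.
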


\begin{proof}
Let $\rho_{\mathrm{sep}}=\sum_ip_i\,\rho_A^{(i)}\otimes\rho_B^{(i)}$ be an arbitrary separable two-qubit state, where $p_i\geq0$, $\sum_i p_i=1$. By the single-qubit conversion property of the maximally imaginary state, for each $i$ there exist real quantum channels $\Lambda_A^{(i)}$ and $\Lambda_B^{(i)}$ such that
\begin{align}
\rho_A^{(i)}=\Lambda_A^{(i)}(\omega_{+}),\qquad \rho_B^{(i)}=\Lambda_B^{(i)}(\omega_{+}).
\end{align}
Let $\left\{K_{A,m}^{(i)}\right\}_{m}$ and $\left\{K_{B,n}^{(i)}\right\}_{n}$ be real Kraus representations of $\Lambda_A^{(i)}$ and $\Lambda_B^{(i)}$, respectively. Define
\begin{align}
L_{i,m,n}=\sqrt{p_i}\,K_{A,m}^{(i)}\otimes K_{B,n}^{(i)}.
\end{align}
	All $L_{i,m,n}$ are real, and
\begin{equation}
\begin{aligned}
\sum_{i,m,n}L_{i,m,n}^{\dagger}L_{i,m,n}
	&=\sum_i p_i\left(\sum_mK_{A,m}^{(i)\dagger}K_{A,m}^{(i)}\right)\otimes\left(\sum_nK_{B,n}^{(i)\dagger}K_{B,n}^{(i)}\right)\\
	&=\sum_i p_i\,I_A\otimes I_B=I_{AB}.
\end{aligned}
\end{equation}
Hence, the operators $\{L_{i,m,n}\}_{i,m,n}$ define a real quantum channel $\Lambda$. Moreover,
\begin{align}
\Lambda(\Omega_{++})=\sum_i p_i\,\Lambda_A^{(i)}(\omega_{+})\otimes\Lambda_B^{(i)}(\omega_{+})=\sum_i p_i\,\rho_A^{(i)}\otimes\rho_B^{(i)}=\rho_{\mathrm{sep}}.
\end{align}
Therefore, all separable two-qubit states are reachable from $\Omega_{++}$ via real operations.
\end{proof}
\begin{corollary}\label{cor:separable-monotone-upper-bound}
For any separable state $\rho_{\mathrm{sep}}$, we have $\mathscr F_R(\rho_{\mathrm{sep}})\leq\mathscr F_R(\Omega_{++})=1$, and $\mathscr F_r(\rho_{\mathrm{sep}})\leq\mathscr F_r(\Omega_{++})=1$. Consequently, $\Omega_{++}$ is a maximizer of both $\mathscr F_R$ and $\mathscr F_r$ over the set of separable two-qubit states.
\end{corollary}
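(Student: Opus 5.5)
The plan is to combine Theorem~\ref{thm:operational-maximal-separable} with the monotonicity axiom (2), and then evaluate both measures explicitly on $\Omega_{++}$.

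\emph{Step 1: the upper bound.} Fix a separable state $\rho_{\mathrm{sep}}$. By Theorem~\ref{thm:operational-maximal-separable} there is a real channel $\Lambda$ with $\rho_{\mathrm{sep}}=\Lambda(\Omega_{++})$. Since $\mathscr F_R$ and $\mathscr F_r$ are imaginarity measures, they are monotone under real operations on the full two-qubit space. Hence $\mathscr F_K(\rho_{\mathrm{sep}})\le\mathscr F_K(\Omega_{++})$ for $K\in\{R,r\}$. Because $\Omega_{++}$ is itself a product state, it lies in $\mathrm{Sep}$, so it attains the maximum of both measures over $\mathrm{Sep}$.

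\emph{Step 2: computing $\mathscr F_R(\Omega_{++})$.} Write $\omega_+=\frac12(I+Y)$, where $Y$ is the Pauli-$y$ matrix, so that $\omega_+^T=\frac12(I-Y)$. Expanding the tensor product gives
\begin{align}
\Omega_{++}-\Omega_{++}^T=\tfrac14\big[(I+Y)\otimes(I+Y)-(I-Y)\otimes(I-Y)\big]=\tfrac12\,(Y\otimes I+I\otimes Y).
\end{align}
The operators $Y\otimes I$ and $I\otimes Y$ commute, with joint eigenvalues $\pm1$. The eigenvalues of $\frac12(Y\otimes I+I\otimes Y)$ are therefore $1,0,0,-1$, and its trace norm is $2$. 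This gives $\mathscr F_R(\Omega_{++})=\frac12\cdot2=1$.

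\emph{Step 3: computing $\mathscr F_r(\Omega_{++})$.} Since $\Omega_{++}$ is pure, $S(\Omega_{++})=0$. We also have
\begin{align}
\mathrm{Re}(\Omega_{++})=\tfrac12(\Omega_{++}+\Omega_{++}^T)=\tfrac18\big[2I\otimes I+2\,Y\otimes Y\big]=\tfrac14(I\otimes I+Y\otimes Y).
\end{align}
The matrix $Y\otimes Y$ has eigenvalues $\pm1$, each with multiplicity two. So $\mathrm{Re}(\Omega_{++})$ has spectrum $\{\frac12,\frac12,0,0\}$, its entropy is $1$, and $\mathscr F_r(\Omega_{++})=1-0=1$.

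The only point requiring care is the eigenvalue bookkeeping in Steps 2 and 3. I expect the main obstacle there to be a sign or normalization slip, not a conceptual issue. I will check the results against the entries of $\Omega_{++}=\frac14 v v^\dagger$ with $v=(1,i,i,-1)^T$: the antisymmetric part of this matrix has rank two, and its real part is a rank-two projector scaled by $\frac12$. Both facts agree with the spectra above.
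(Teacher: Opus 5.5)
Your proposal is correct and matches the paper's proof: Theorem~\ref{thm:operational-maximal-separable} combined with monotonicity of $\mathscr F_R$ and $\mathscr F_r$ under real channels gives the bound, and the values are then evaluated at $\Omega_{++}$. Your computations correctly supply the ``direct calculation'' that the paper leaves implicit, namely the spectrum $\{1,0,0,-1\}$ of $\Omega_{++}-\Omega_{++}^{\mathrm T}$ and the spectrum $\{\tfrac12,\tfrac12,0,0\}$ of $\mathrm{Re}(\Omega_{++})$.
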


\begin{proof}
By Theorem~\ref{thm:operational-maximal-separable}, for every $\rho_{\mathrm{sep}}\in\mathrm{Sep}$ there exists a real channel $\Lambda$ such that $\rho_{\mathrm{sep}}=\Lambda(\Omega_{++})$. Monotonicity under real operations gives 
\begin{align}
\mathscr F_K(\rho_{\mathrm{sep}})=\mathscr F_K\!\left(\Lambda(\Omega_{++})\right)\leq\mathscr F_K(\Omega_{++}),\qquad K\in\{R,r\}.
\end{align}	
A direct calculation yields $\mathscr F_R(\Omega_{++})=\mathscr F_r(\Omega_{++})=1$.
\end{proof}

Although the operationally maximal separable state is generally not unique, we choose $\Omega_{++}=|+\rangle\langle+|\otimes|+\rangle\langle+|$ as a canonical separable reference state. This choice is not arbitrary. The state $\Omega_{++}$ is operationally maximal under real operations, is a pure product state containing no entanglement, and can be prepared locally from two maximally imaginary single-qubit states. Moreover, $\mathscr F_R(\Omega_{++})=\mathscr F_r(\Omega_{++})=1$. It therefore provides a normalized, locally preparable, and reproducible resource probe for comparing the imaginarity-degrading effects of different two-qubit channels under a common input condition.

\begin{remark}
The term ``normalized'' refers to the normalization of its resource values rather than merely to the unit trace of the density operator. With logarithms taken to base $2$, the chosen reference state satisfies $\mathscr{F}_R(\Omega_{++})=\mathscr{F}_r(\Omega_{++})=1$. Consequently, $\mathscr{F}_K\big(\mathcal{E}(\Omega_{++})\big)$ directly represents the retained fraction of the reference‑state imaginarity, while $\mathcal{D}_K^{(++)}(\mathcal{E})=1-\mathscr{F}_K\big(\mathcal{E}(\Omega_{++})\big)$ represents the corresponding lost fraction.
\end{remark}

\subsection{Channel Powers and Reference-State Imaginarity Losses}
Zhang et al.~\cite{Zhang2023} introduced channel-level quantities for characterizing the ability of single-qubit quantum channels to generate and destroy imaginarity. Building on this perspective, we consider two-qubit channels acting on separable states. Because an operationally maximal separable input need not be unique, we distinguish the operational de-imaginary power, which involves an optimization over all such inputs, from the imaginarity loss evaluated at the fixed reference state $\Omega_{++}$.
\begin{definition}[Imaginary power on separable states]Let $\varepsilon$ be a two-qubit quantum channel. For $K\in\{R,r\}$, its imaginary power with respect to separable states is defined as
\begin{align}
\mathcal L_K^{\mathrm{sep}}(\varepsilon)=\max_{\rho\in\mathrm{Sep}\cap\mathcal R}\left[\mathscr F_K(\varepsilon(\rho))-\mathscr F_K(\rho)\right],
\end{align}	
where $\mathcal R$ denotes the set of real two-qubit states. Since $\mathscr F_K(\rho)=0$ for every $\rho\in\mathcal R$, this reduces to
	$\mathcal L_K^{\mathrm{sep}}(\varepsilon)=\max_{\rho\in\mathrm{Sep}\cap\mathcal R}\mathscr F_K(\varepsilon(\rho))$.
\end{definition}
Let $\mathcal M_{\mathrm{op}}^{\mathrm{sep}}=\left\{\Omega\in\mathrm{Sep}:\Omega\succeq_{\mathbb R}\rho\text{ for every }\rho\in\mathrm{Sep}\right\}$ denote the set of operationally maximal separable states.

\begin{definition}[Operational de-imaginary power]\label{def:operational-deimaginary-power}
	For $K\in\{R,r\}$, the operational de-imaginary power of a channel
	$\varepsilon$ is defined as
\begin{align}
	\mathcal D_K^{\mathrm{op}}(\varepsilon)
	=
	\sup_{
		\Omega\in
		\mathcal M_{\mathrm{op}}^{\mathrm{sep}}
	}
	\left[
	\mathscr F_K(\Omega)
	-
	\mathscr F_K(\varepsilon(\Omega))
	\right].
\end{align}	
\end{definition}

\begin{definition}[Reference-state de-imaginarity loss]
For the operationally maximal reference state $\Omega_{++}=|+\rangle\langle+|\otimes|+\rangle\langle+|$, we define $\mathcal D_K^{(++)}(\varepsilon)=\mathscr F_K(\Omega_{++})-\mathscr F_K\left(\varepsilon(\Omega_{++})\right)$, where $K\in\{R,r\}$. Since
$\mathscr F_R(\Omega_{++})=\mathscr F_r(\Omega_{++})=1$, we have $\mathcal D_K^{(++)}(\varepsilon)=1-\mathscr F_K\left(\varepsilon(\Omega_{++})\right)$.
\end{definition}
For $K\in\{R,r\}$, the output quantity $\mathscr{F}_K\big(\mathcal{E}(\Omega_{++})\big)$ represents the normalized amount of reference‑state imaginarity retained by the channel, whereas $\mathcal{D}_K^{(++)}(\mathcal{E})= 1-\mathscr{F}_K\big(\mathcal{E}(\Omega_{++})\big)$ quantifies the corresponding loss. Since $\Omega_{++}\in\mathcal{M}_{\mathrm{op}}^{\mathrm{sep}}$, the reference‑state loss satisfies $0\le \mathcal{D}_K^{(++)}(\mathcal{E})\le \mathcal{D}_K^{\mathrm{op}}(\mathcal{E})\le 1$. Hence, the quantities evaluated below are exact reference‑state losses and analytically computable lower bounds on the corresponding operational de‑imaginary powers. Equality with the operational de‑imaginary power requires either an explicit optimization over all operationally maximal separable states or an appropriate covariance property of the channel.

For completeness, we recall two information-theoretic quantities used below. For a finite ensemble $\mathcal E=\left\{p_x,\rho_x\right\}_x$, its Holevo quantity is defined as
\begin{equation}\label{eq:Holevo-quantity-definition}
\chi(\mathcal E)=S\!\left(\sum_x p_x\rho_x\right)-\sum_x p_xS(\rho_x).
\end{equation}
The Holevo bound states that the accessible information of the ensemble does not exceed $\chi(\mathcal E)$ \cite{Holevo1973}. For an equal-prior binary ensemble $\{(\frac{1}{2},\rho_0),(\frac{1}{2},\rho_1)\}$, the Shannon distinguishability is
defined as
\begin{equation}
	\label{eq:Shannon-distinguishability-definition}
	\operatorname{SD}(\rho_0,\rho_1)
	=
	\sup_{\{M_y\}}
	I(X:Y),
\end{equation}
where $X\in\{0,1\}$ labels the prepared state, $Y$ denotes the measurement outcome, and the supremum is taken over all POVMs $\{M_y\}$. Thus, $\operatorname{SD}(\rho_0,\rho_1)$ is the
accessible information of the equal-prior binary ensemble
\cite{FuchsVanDeGraaf1999}.

\begin{lemma}[Relation between $\mathscr F_r$ and $\mathscr F_R$]\label{lem:Fr-FR-relation}
For every density operator $\rho$,
\begin{equation}
		\label{eq:Fr-FR-relation}
		\mathscr F_r(\rho)
		\geq
		1-
		\mathcal H\!\left(
		\mathscr F_R(\rho)
		\right),
\end{equation}
where $\mathcal H(t)=h_2\!\left(\frac{1+t}{2}\right)$, the definition of $h_2$ is given in Eq.~\eqref{h_2(q)}.
\end{lemma}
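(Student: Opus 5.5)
The plan is to read $\mathscr F_r(\rho)$ as a Holevo quantity and bound it from below by the accessible information of a binary discrimination task. The discrimination performance of that task is controlled by $\mathscr F_R(\rho)$.

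\emph{Step 1 (Holevo form of $\mathscr F_r$).} Since $\mathrm{Re}(\rho)=\frac12(\rho+\rho^{T})$ and $\rho^{T}$ has the same spectrum as $\rho$, we have $S(\rho^{T})=S(\rho)$. Hence
\begin{equation*}
\mathscr F_r(\rho)=S\!\left(\tfrac12\rho+\tfrac12\rho^{T}\right)-\tfrac12 S(\rho)-\tfrac12 S(\rho^{T})=\chi\big(\{(\tfrac12,\rho),(\tfrac12,\rho^{T})\}\big),
\end{equation*}
which is exactly the Holevo quantity of Eq.~\eqref{eq:Holevo-quantity-definition} for the equal-prior ensemble $\{\rho,\rho^T\}$.

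\emph{Step 2 (Holevo bound).} By the Holevo bound \cite{Holevo1973}, $\chi$ dominates the accessible information. Therefore
\begin{equation*}
\mathscr F_r(\rho)\ge \operatorname{SD}(\rho,\rho^{T})\ge I(X:Y)
\end{equation*}
for any particular POVM. It thus suffices to exhibit one measurement whose mutual information is at least $1-\mathcal H(\mathscr F_R(\rho))$.

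\emph{Step 3 (Helstrom measurement and Fano).} Let $D=\frac12\|\rho-\rho^{T}\|_1=\mathscr F_R(\rho)$. Take the two-outcome Helstrom measurement $\{P,I-P\}$, where $P$ is the projector onto the positive part of $\rho-\rho^{T}$. With equal priors, its average error probability is $P_e=\frac{1-D}{2}$. I do not want to prove that the induced binary channel is symmetric. Instead I will use Fano's inequality for a binary variable, $H(X|Y)\le h_2(P_e)$. Since $H(X)=1$, this gives
\begin{equation*}
I(X:Y)\ge 1-h_2\!\left(\tfrac{1-D}{2}\right)=1-h_2\!\left(\tfrac{1+D}{2}\right)=1-\mathcal H(D),
\end{equation*}
using $h_2(q)=h_2(1-q)$. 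Combining this with Step 2 yields Eq.~\eqref{eq:Fr-FR-relation}.

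The main point needing care is Step 3. It requires correctly identifying the Helstrom error $\frac{1-D}{2}$ and making sure the binary Fano bound applies without assuming a symmetric error profile. If $D=1$, then $P_e=0$ and the bound reads $\mathscr F_r\ge 1$. This is consistent, because perfect distinguishability forces $\chi=1$. As a sanity check, for a qubit with Bloch vector $(x,y,z)$ one has $D=|y|$, and Eq.~\eqref{eq:relative-entropy-bloch} gives equality exactly for pure states with $x=z=0$.
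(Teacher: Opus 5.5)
Your argument is correct and is essentially the paper's proof: you identify $\mathscr F_r(\rho)$ with the Holevo quantity of the equal-prior ensemble $\{\rho,\rho^{T}\}$, apply the Holevo bound, and lower-bound the accessible information by $1-h_2\bigl(\tfrac{1-\mathscr F_R(\rho)}{2}\bigr)$. The only difference is that you derive this last bound yourself from the Helstrom measurement and the binary Fano inequality, whereas the paper cites it from Fuchs--van de Graaf. One small correction to your qubit sanity check (which is outside the proof): equality holds for every qubit state with $x=z=0$, pure or mixed, and trivially for every real state, so it is not confined to pure states with $x=z=0$.
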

\begin{proof}
Consider the equal-prior ensemble $\mathcal E_\rho=\left\{\left(\frac12,\rho\right),\left(\frac12,\rho^{\mathrm T}\right)\right\}$. Its Holevo quantity is
\begin{align}
\chi(\mathcal E_\rho)=S\!\left(\frac{\rho+\rho^{\mathrm T}}{2}\right)-\frac12S(\rho)-\frac12S(\rho^{\mathrm T})=S(\operatorname{Re}\rho)-S(\rho)=\mathscr F_r(\rho),\label{eq:Holevo-equals-Fr}
\end{align}
where $S(\rho^{\mathrm T})=S(\rho)$ has been used. By the Holevo bound, $\operatorname{SD}(\rho,\rho^{\mathrm T})\leq\chi(\mathcal E_\rho)$. On the other hand, Theorem~1 of Ref.~\cite{FuchsVanDeGraaf1999} gives $\operatorname{SD}(\rho,\rho^{\mathrm T})\geq1-h_2\!\left(\frac12-\frac14\left\|\rho-\rho^{\mathrm T}\right\|_1\right)$. Since $\mathscr F_R(\rho)=\frac12\left\|\rho-\rho^{\mathrm T}\right\|_1$, we obtain $\mathscr F_r(\rho)\geq1-h_2\!\left(\frac{1-\mathscr F_R(\rho)}{2}\right)$.
	
Finally, using the symmetry $h_2(q)=h_2(1-q)$, we have $h_2\!\left(\frac{
1-\mathscr F_R(\rho)}{2}\right)=h_2\!\left(\frac{1+\mathscr F_R(\rho)}{2}\right)=\mathcal H\!\left(\mathscr F_R(\rho)\right)$, which proves Eq.~\eqref{eq:Fr-FR-relation}.
\end{proof}

\subsection{Reference-State Losses and Exact Operational Results for Selected Two-Qubit Channels}

In this subsection, reference-state losses are evaluated for all channels considered below. For the $PD$, $PF$, and $BF$ channels, we additionally perform the optimization over $\mathcal M_{\mathrm{op}}^{\mathrm{sep}}$ and obtain the exact operational de-imaginary powers. For brevity, the superscript $(++)$ is omitted only for the reference-state quantities in the corresponding formulas and figures. The optimized quantities are always written explicitly as $\mathcal D_K^{\mathrm{op}}$.

\subsubsection{Phase-Damping, Phase-Flip, and Bit-Flip Channels}
We first consider the two-qubit phase damping ($PD$), two-qubit phase-flipping ($PF$) and two-qubit bit-flipping ($BF$) channels. For brevity of the main text, the detailed matrix forms of their Kraus operators are summarized in Appendix~\ref{app:kraus-operators}; see Eqs.~\eqref{eq:PD}, \eqref{eq:PF} and \eqref{eq:BF}.

The Kraus operators of $PD$, $PF$ and $BF$ are obtained by taking the tensor product of the Kraus operators of the corresponding single-qubit channels respectively. Through calculation, it can be found that the above three groups of operators all satisfy the completeness condition: $\sum_{i,j\in\left \{ 0,1 \right \} }^{}K_{ij}^{\dagger }K_{ij}=I_4$.

Since the Kraus operators of the two-qubit phase damping ($PD$), two-qubit phase-flipping ($PF$) and two-qubit bit-flipping ($BF$) channels are all real, a real separable state remains a real state after passing through this channel. Therefore, the imaginary power of the two-qubit $PD$, $PF$ and $BF$ channels with respect to separable states are 0.

For the common operationally maximal reference state $\Omega_{++}=|+\rangle\langle+|\otimes|+\rangle\langle+|$, the evolved density matrices under the PD, PF, and BF channels are given in Appendix~\ref{app:evolved-density-matrices}, Eqs. \eqref {eq:pd-state-evol}, \eqref {eq:pf-state-evol} and \eqref {eq:bf-state-evol}.

Evaluating the robustness‑ and relative‑entropy‑based imaginarity losses of the common reference state $\Omega_{++}$, we obtain the following expressions.
\begin{equation}
		\begin{aligned}
			\mathcal{D}_{R }(\varepsilon _{PD})=&1-\frac{\alpha _{1}+ \alpha _{2}+\left | \alpha _{1}-\alpha _{2} \right | }{2},\\
			\mathcal{D}_{R }(\varepsilon _{PF})=&\mathcal{D}_{R }(\varepsilon _{BF})=1-\left | p_{1} +p_{2}-1 \right |-\left | p_{1}-p _{2}  \right | ;			
		\end{aligned}
\end{equation}
\begin{equation}
		\begin{aligned}
			\mathcal{D}_{r }(\varepsilon _{PD})=&\mathcal H(\alpha_1)+\mathcal H(\alpha_2)
			-\mathcal H(\alpha_1\alpha_2),\\
			\mathcal{D}_{r }(\varepsilon _{PF})=\mathcal{D}&_{r }(\varepsilon _{BF})=\mathcal H(|t_1|)+\mathcal H(|t_2|)-\mathcal H(|t_1t_2|),			
		\end{aligned}
\end{equation}
where $\alpha _{i}=\sqrt{1-\gamma _{i} }$, $t_i=2p_i-1$, $i=1,2$, and the function $\mathcal{H}(\cdot)$ is given by Eq.~\eqref{H_t}.
\begin{figure}[h]
	\centering
	\includegraphics[width=0.9\linewidth]{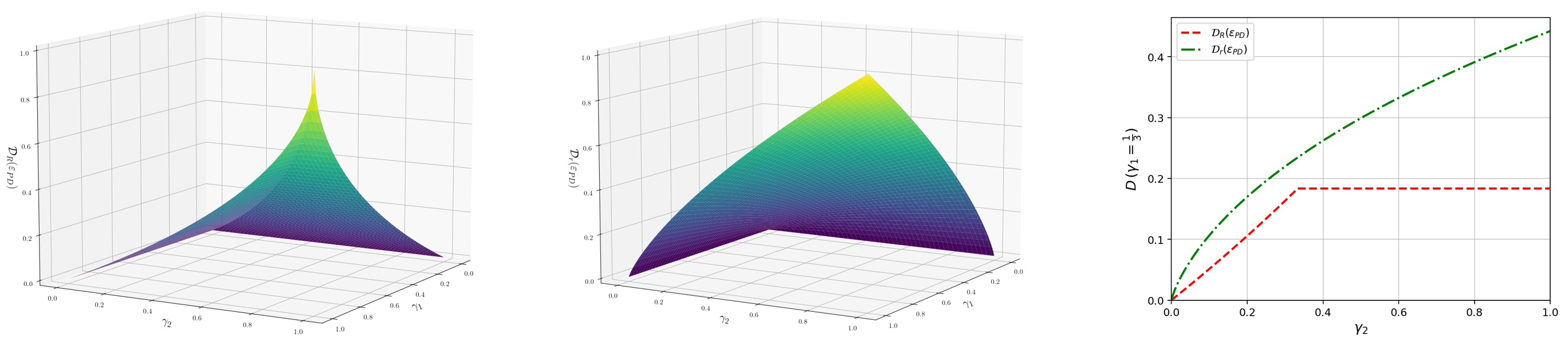}
	\caption{Reference-state imaginarity losses under a two-qubit phase-damping channel.}
	\label{fig:6pd}
\end{figure}
\begin{figure}[h]
	\centering
	\includegraphics[width=0.9\linewidth]{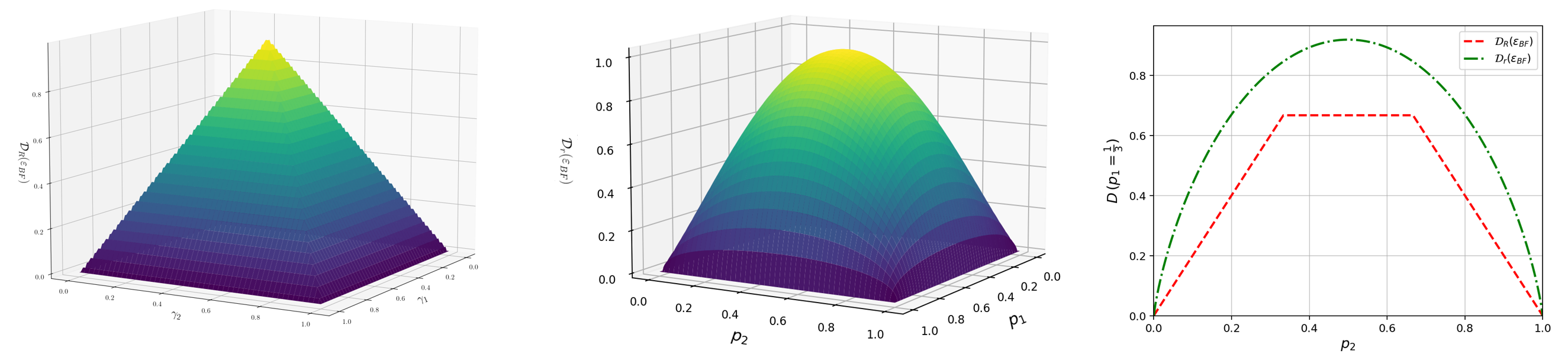}
	\caption{Reference-state imaginarity losses under a two-qubit bit-flip channel.}
	\label{fig:7pf-and-bf}
\end{figure}

Figure~\ref{fig:6pd} shows the robustness- and relative-entropy-based reference-state losses of the two-qubit phase-damping channel. Both quantities vanish whenever either $\gamma_1=0$ or $\gamma_2=0$, and both attain unity at $\gamma_1=\gamma_2=1$. For the cross-section $\gamma_1=1/3$, $\mathcal D_R(\varepsilon_{\mathrm{PD}})$ increases with $\gamma_2$ on $0\leq\gamma_2\leq\frac13$ and then remains constant for $\frac13\leq\gamma_2\leq1$. Its plateau value is $1-\sqrt{\frac23}$. In contrast,
$\mathcal D_r(\varepsilon_{\mathrm{PD}})$ continues to increase monotonically throughout the whole interval $0\leq\gamma_2\leq1$. Thus, the robustness-based loss saturates once $\gamma_2$ exceeds $\gamma_1$, whereas the relative-entropy-based loss remains sensitive to further phase damping.

Figure~\ref{fig:7pf-and-bf} shows the reference-state losses of the two-qubit bit-flip channel. Both surfaces are symmetric under the interchange $p_1\leftrightarrow p_2$. The two losses vanish on the boundary where at least one of the parameters is equal to $0$ or $1$, and they attain unity at $p_1=p_2=\frac12$. For the cross-section $p_1=1/3$, the robustness-based loss is
\begin{equation}
\begin{aligned}
\mathcal D_R(\varepsilon_{\mathrm{BF}})=
\begin{cases}
	2p_2,
	&
	0\leq p_2\leq\frac13,
	\\[1mm]
	\displaystyle\frac23,
	&
	\frac13\leq p_2\leq\frac23,
	\\[1mm]
	2(1-p_2),
	&
	\frac23\leq p_2\leq1.
\end{cases}
\end{aligned}
\end{equation}
It therefore increases, remains constant over the interval $[1/3,2/3]$, and then decreases. The relative-entropy-based loss is symmetric about $p_2=1/2$; it increases for $0\leq p_2\leq1/2$, decreases for $1/2\leq p_2\leq1$, and reaches its maximum at $p_2=1/2$.

The preceding results characterize the reference‑state imaginarity losses evaluated on the physically feasible and analytically tractable state $\Omega_{++}$, which provides rigorous lower bounds and intuitive benchmark comparisons for the imaginarity‑degrading capabilities of quantum channels. We now carry out the exact evaluation over the entire set $\mathcal{M}_{\mathrm{op}}^{\mathrm{sep}}$ from Definition~\ref{def:operational-deimaginary-power}, and obtain the exact operational de‑imaginary powers of the $PD$, $PF$, and $BF$ channels.
\begin{proposition}[Operational de-imaginary powers of the PD, PF,
	and BF channels]
\label{prop:operational-deimaginary-PD-PF-BF}
Let $\alpha_i=\sqrt{1-\gamma_i}$, $t_i=2p_i-1$, where $i=1,2$, and define
\begin{equation}
\begin{aligned}
	c_{\mathrm{PD}}=&\alpha_1\alpha_2=\sqrt{(1-\gamma_1)(1-\gamma_2)},\\
	c_{\mathrm{F}}=&|t_1t_2|=\left|(2p_1-1)(2p_2-1)\right|.
\end{aligned}
\end{equation}
Then the operational de-imaginary powers in Definition~\ref{def:operational-deimaginary-power} are
\begin{equation}
\begin{aligned}
\mathcal D_R^{\mathrm{op}}\left(\varepsilon_{\mathrm{PD}}\right)&=1-c_{\mathrm{PD}},		\\
\mathcal D_r^{\mathrm{op}}\left(\varepsilon_{\mathrm{PD}}\right)&=\mathcal H(c_{\mathrm{PD}}),\\
		\mathcal D_R^{\mathrm{op}}
		\left(
		\varepsilon_{\mathrm{PF}}
		\right)
		&=
		\mathcal D_R^{\mathrm{op}}
		\left(
		\varepsilon_{\mathrm{BF}}
		\right)
		=
		1-c_{\mathrm{F}},
		\\
		\mathcal D_r^{\mathrm{op}}
		\left(
		\varepsilon_{\mathrm{PF}}
		\right)
		&=
		\mathcal D_r^{\mathrm{op}}
		\left(
		\varepsilon_{\mathrm{BF}}
		\right)
		=
		\mathcal H(c_{\mathrm{F}}),
\end{aligned}
\end{equation}
where $\mathcal H(c)=h_2\left(\frac{1+c}{2}\right)$, the definition of $h_2$ is given in Eq.~\eqref{h_2(q)}. The maxima are attained by the operationally maximal separable states $\Omega_{\mathrm{PD}}^\star=\Omega_{\mathrm{PF}}^\star=\frac14\left(I\otimes I+Y\otimes X\right)$ and $\Omega_{\mathrm{BF}}^\star=\frac14\left(I\otimes I+Y\otimes Z\right)$, respectively.
\end{proposition}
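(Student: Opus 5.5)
The proof has two parts. First, a lower bound on $\mathcal D_K^{\mathrm{op}}$ from the explicit candidates $\Omega^\star$. Second, a matching upper bound valid for every $\Omega\in\mathcal M_{\mathrm{op}}^{\mathrm{sep}}$.

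\textbf{Reduction.} If $\Omega\in\mathcal M_{\mathrm{op}}^{\mathrm{sep}}$, then $\Omega\succeq_{\mathbb R}\Omega_{++}$. Monotonicity together with Corollary~\ref{cor:separable-monotone-upper-bound} then gives $\mathscr F_K(\Omega)=1$ for $K\in\{R,r\}$. Hence
\begin{equation*}
\mathcal D_K^{\mathrm{op}}(\varepsilon)=1-\inf_{\Omega\in\mathcal M_{\mathrm{op}}^{\mathrm{sep}}}\mathscr F_K(\varepsilon(\Omega)).
\end{equation*}
The task is therefore to show that this infimum equals $c$ for $K=R$ and $1-\mathcal H(c)$ for $K=r$, where $c\in\{c_{\mathrm{PD}},c_{\mathrm F}\}$.

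\textbf{Step 1 (membership and evaluation of $\Omega^\star$).} The state $\Omega^\star=\frac14(I\otimes I+Y\otimes X)$ is separable: it is the equal mixture of $|y_\pm\rangle\langle y_\pm|\otimes|x_\pm\rangle\langle x_\pm|$ with correlated signs. Its two spectral projectors $P_\pm=\frac12(I\pm Y\otimes X)$ satisfy $P_\pm^{\mathrm T}=P_\mp$, so $\Omega^\star$ is orthogonal to $(\Omega^\star)^{\mathrm T}$.

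I would build a real channel $\Omega^\star\mapsto\omega_+$ explicitly. Choose real orthonormal vectors $a_k,b_k$ with $P_+=\sum_k(a_k+ib_k)(a_k+ib_k)^\dagger/2$, and form real Kraus operators of the type $|0\rangle\langle a_k|+|1\rangle\langle b_k|$. This is the same mechanism by which a qubit state with $\mathscr F_R=1$ is mapped to $|+\rangle$. Then compose with the single-qubit conversion $\omega_+\mapsto\Omega_{++}$, which exists by the conversion property quoted in the preliminaries. This shows $\Omega^\star\in\mathcal M_{\mathrm{op}}^{\mathrm{sep}}$.

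All three channels are Pauli-diagonal with real Kraus operators. PD and PF scale $X,Y$ on qubit $i$ by $\alpha_i$ and $t_i$ respectively, while BF scales $Y,Z$ by $t_i$. Consequently $\varepsilon(\Omega^\star)=\frac14(I+c\,Y\otimes X)$, or $\frac14(I+c\,Y\otimes Z)$ for BF. This output has eigenvalues $(1\pm c)/4$, each with multiplicity two, and its real part is $I/4$. A direct computation then gives $\mathscr F_R=c$ and $\mathscr F_r=2-(2-\mathcal H(c))\cdot 1=1-\mathcal H(c)$. This yields the lower bounds $\mathcal D_R^{\mathrm{op}}\ge1-c$ and $\mathcal D_r^{\mathrm{op}}\ge\mathcal H(c)$.

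\textbf{Step 2 (upper bound; main obstacle).} I must show $\mathscr F_K(\varepsilon(\Omega))\ge c$ (respectively $\ge 1-\mathcal H(c)$) for every $\Omega\in\mathcal M_{\mathrm{op}}^{\mathrm{sep}}$. By Lemma~\ref{lem:Fr-FR-relation} and the monotonic decrease of $\mathcal H$, the $K=r$ bound follows from the $K=R$ bound, so only $\mathscr F_R(\varepsilon(\Omega))\ge c$ is needed.

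Write $\Delta=\Omega-\Omega^{\mathrm T}$. It lies in the span of the Pauli strings containing an odd number of $Y$'s: $Y\otimes I$, $I\otimes Y$, $Y\otimes X$, $Y\otimes Z$, $X\otimes Y$, $Z\otimes Y$. The channel multiplies each such string by a factor that is at least $c$; for PD, $Y\otimes I$ gets $\alpha_1\ge\alpha_1\alpha_2$, and so on. The difficulty is that coefficientwise scaling does not directly control the trace norm. My plan is to use the structural constraint $\Omega\perp\Omega^{\mathrm T}$, which follows from $\mathscr F_R(\Omega)=1$, together with separability.

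Every product component $\rho_A\otimes\rho_B$ must be orthogonal to every transposed component. I would first show that this forces all components to be pure. I would then show that each $\rho_A$ or $\rho_B$ is chosen so that the support of $\Omega$ is one of the two sign-classes of a real-Pauli-twirl-invariant decomposition. After that, I would apply a commuting real twirl, i.e.\ conjugation by real Pauli products, which commutes with the Pauli-diagonal $\varepsilon$. This reduces $\Omega$ to a mixture of the six "odd-$Y$" sectors, each of which behaves like the single-qubit case of Theorem~\ref{thm:real-qubit-attenuation} with attenuation factor at least $c$.

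I expect this classification of separable $\Omega$ with $\Omega\perp\Omega^{\mathrm T}$ to be the hardest step. If it resists, I would fall back on a dual-witness argument. There, $\mathscr F_R(\sigma)=\max\operatorname{Tr}(W\sigma)$ over imaginary antisymmetric $W$ with $\|W\|_\infty\le1$, and I would choose $W=\varepsilon^{\dagger-1}$-adapted witnesses built from the sector decomposition of $\Omega$.
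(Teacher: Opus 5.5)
Your reduction and Step 1 are correct. The channel $\Omega^\star\mapsto\omega_+$ works because $P_+^{\mathrm T}=P_-$ makes $\{v_1,v_2,\bar v_1,\bar v_2\}$, with $v_k=(a_k+ib_k)/\sqrt2$, an orthonormal basis. Hence $\{a_1,b_1,a_2,b_2\}$ is a real orthonormal basis, the Kraus operators $|0\rangle\langle a_k|+|1\rangle\langle b_k|$ are complete, and $K_k|v_j\rangle=\delta_{jk}|+\rangle$. This is a valid and more general substitute for the paper's $K_\pm$-plus-orthogonal-rotation construction. Your intermediate expression for $\mathscr F_r$ is garbled: $S(\varepsilon(\Omega^\star))=1+\mathcal H(c)$, so $\mathscr F_r=2-1-\mathcal H(c)$.

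\textbf{The gap is Step 2.} This step is the whole content of ``exact'': you must show $\mathscr F_R(\varepsilon(\Omega))\ge c$ for \emph{every} $\Omega\in\mathcal M_{\mathrm{op}}^{\mathrm{sep}}$. You only sketch a classification route, and its pieces do not work as stated.
\begin{itemize}
\item Orthogonality $\Omega\perp\Omega^{\mathrm T}$ does not force purity. It only forces one factor of each product component to be $\omega_\pm$; the other may be mixed, e.g.\ $\omega_+\otimes I/2\in\mathcal M_{\mathrm{op}}^{\mathrm{sep}}$.
\item A real Pauli twirl can destroy the input's imaginarity. Averaging $\Omega_{++}$ with its $Z\otimes Z$ conjugate gives the real state $\frac12(\omega_+\otimes\omega_++\omega_-\otimes\omega_-)$. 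So after twirling you no longer know that the input has $\mathscr F_R=1$.
\item Sector-wise single-qubit bounds cannot simply be summed, because the trace norm is not additive over the odd-$Y$ strings. For example, $\|aY\otimes I+bY\otimes X\|_1=4\max(|a|,|b|)\neq 4(|a|+|b|)$.
\end{itemize}

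\textbf{The missing idea is a reverse trace-norm contraction, which makes any classification unnecessary.} Each local factor has the form $\Phi_{s,U}(A)=\frac{1+s}{2}A+\frac{1-s}{2}UAU$ with $U\in\{X,Z\}$ and $|s|\le1$. For $s\neq0$ its inverse is $\frac{1+s^{-1}}{2}A+\frac{1-s^{-1}}{2}UAU$, which increases the trace norm by at most $\frac{|1+s^{-1}|+|1-s^{-1}|}{2}=|s|^{-1}$. Hence $\|\Phi_{s,U}(A)\|_1\ge|s|\,\|A\|_1$. Applying this to $\Phi_{s_1,U}\otimes\mathrm{id}$ and then $\mathrm{id}\otimes\Phi_{s_2,U}$ gives $\|\varepsilon(A)\|_1\ge c\|A\|_1$ for all $A$.

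Because the Kraus operators are real, $\varepsilon(\Omega)-\varepsilon(\Omega)^{\mathrm T}=\varepsilon(\Omega-\Omega^{\mathrm T})$. Therefore $\mathscr F_R(\varepsilon(\Omega))\ge c\,\mathscr F_R(\Omega)=c$ for every input with $\mathscr F_R(\Omega)=1$; separability plays no role. Lemma~\ref{lem:Fr-FR-relation} then gives the $\mathscr F_r$ bound as you intended, and the case $c=0$ is trivial.

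Your fallback is in fact the dual form of this argument. Let $W_0$ be an optimal imaginary antisymmetric witness for $\Omega$. Then $W=c\,(\varepsilon^{\dagger})^{-1}(W_0)$ is again imaginary antisymmetric and satisfies $\|W\|_\infty\le1$ by the same inverse estimate in operator norm. It gives $\operatorname{Tr}(W\varepsilon(\Omega))=c\operatorname{Tr}(W_0\Omega)=c$. What you must supply is precisely this norm bound on the inverse map, not a sector decomposition of $\Omega$.
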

The proof of Proposition~\ref{prop:operational-deimaginary-PD-PF-BF} is given in Appendix~\ref{app:proof-operational-deimaginary}.

\subsubsection{Amplitude-Damping and Phase-Amplitude-Damping Channels}
Next, we consider the two-qubit amplitude damping ($AD$) and phase-amplitude damping ($PAD$) channels. The detailed matrix forms of their Kraus operators are summarized in Appendix~\ref{app:kraus-operators}; see Eqs.~\eqref{eq:AD} and \eqref{eq:PAD}.

Since the Kraus operators of the two-qubit amplitude damping ($AD$) and phase-amplitude damping ($PAD$) channels are all real, a real separable state remains a real state after passing through this channel. Therefore, the imaginary power of the two-qubit $AD$ and $PAD$ channels with respect to separable states is 0.

For the common reference state $\Omega_{++}$, the evolved density matrices under the AD and PAD channels are given in the Appendix~\ref{app:evolved-density-matrices}; see Eqs.~\eqref {eq:ad-state-evol} and~\eqref {eq:pad-state-evol}.

For the robustness of imaginarity, direct evaluation of the reference-state outputs gives
\begin{equation}
		\begin{aligned}
			\mathcal{D}_{R }(\varepsilon _{AD})=1&-\frac{1}{2} \sqrt{\beta _{1}(\gamma _{2}^{2}+1 )+\beta _{2}(\gamma _{1}^{2}+1 )-2\sqrt{\beta _{1}\beta _{2}(\gamma _{1}^{2}\gamma _{2}^{2}+1)+\gamma _{1}^{2}\beta _{2} ^{2}+\gamma _{2}^{2}\beta _{1} ^{2} } }\\
			&- \frac{1}{2} \sqrt{\beta _{1}(\gamma _{2}^{2}+1 )+\beta _{2}(\gamma _{1}^{2}+1 )+2\sqrt{\beta _{1}\beta _{2}(\gamma _{1}^{2}\gamma _{2}^{2}+1)+\gamma _{1}^{2}\beta _{2} ^{2}+\gamma _{2}^{2}\beta _{1} ^{2} } } ,\\
			\mathcal{D}_{R }(\varepsilon _{PAD})=1&-\frac{\sqrt{-\gamma_{1}\gamma_{2}^{2}-\gamma_{1}+\gamma_{2}^{2}- \gamma_{2}+2\sqrt{\gamma_{2}^{2}\beta _{1}^{2}+\beta _{1}\beta _{2} } +2 } }{2}\\
			&-\frac{\sqrt{-\gamma_{1}\gamma_{2}^{2}-\gamma_{1}+\gamma_{2}^{2}- \gamma_{2}-2\sqrt{\gamma_{2}^{2}\beta _{1}^{2}+\beta _{1}\beta _{2} } +2 } }{2},			
		\end{aligned}
\end{equation}
where $\beta _{i}=1-\gamma _{i}$, $i=1,2$.

For the relative entropy of imaginarity, direct evaluation of the reference-state outputs gives
\begin{equation}
	\label{eq:Dr-AD-expanded}
	\begin{aligned}
		\mathcal{D}_{r}
		\left(
		\varepsilon_{\mathrm{AD}}
		\right)
		={}&
		1
		-\frac{1+r_1}{2}\log{(1+r_1)}
		-\frac{1-r_1}{2}\log{(1-r_1)}+\frac{\xi_{+}}{4}\log{\xi_{+}}
		+\frac{\xi_{-}}{4}\log{\xi_{-}}
		\\
		&-\frac{1+r_2}{2}\log{(1+r_2)}
		-\frac{1-r_2}{2}\log{(1-r_2)}+\frac{\zeta_{+}}{4}\log{\zeta_{+}}
		+\frac{\zeta_{-}}{4}\log{\zeta_{-}},
	\end{aligned}
\end{equation}
where \begin{equation}
	\begin{aligned}
		r_i=&\sqrt{1-\gamma_i+\gamma_i^2},\qquad i=1,2,\\
		\xi_{\pm}=&1+\gamma_1\gamma_2\pm \sqrt{(\gamma_1+\gamma_2)^2+(1-\gamma _{1})(1-\gamma _{2})},\\
		\zeta_{\pm}=&1-\gamma_1\gamma_2\pm \sqrt{(\gamma_1-\gamma_2)^2+(1-\gamma _{1})(1-\gamma _{2})};
	\end{aligned}
\end{equation}
\begin{equation}
	\label{eq:Dr-PAD-expanded}
	\begin{aligned}
		\mathcal{D}_{r}\left(\varepsilon_{\mathrm{PAD}}\right)=\mathcal H(u)+\mathcal H(v)-\mathcal H(w),
	\end{aligned}
\end{equation}
where $u=\sqrt{1-\gamma_1}$, $v=\sqrt{1-\gamma_2+\gamma_2^2}$, $w=\sqrt{\gamma_2^2+(1-\gamma_1)(1-\gamma_2)}$.
\begin{figure}[H]
	\centering
	\includegraphics[width=0.9\linewidth]{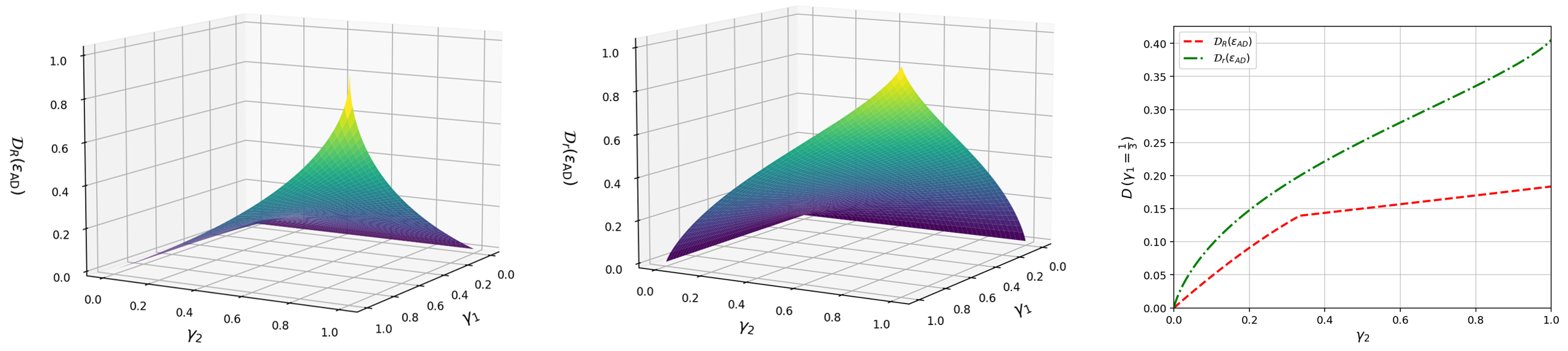}
	\caption{Reference-state imaginarity losses under a two-qubit amplitude-damping channel.}
	\label{fig:8ad}
\end{figure}
\begin{figure}[H]
	\centering
	\includegraphics[width=0.9\linewidth]{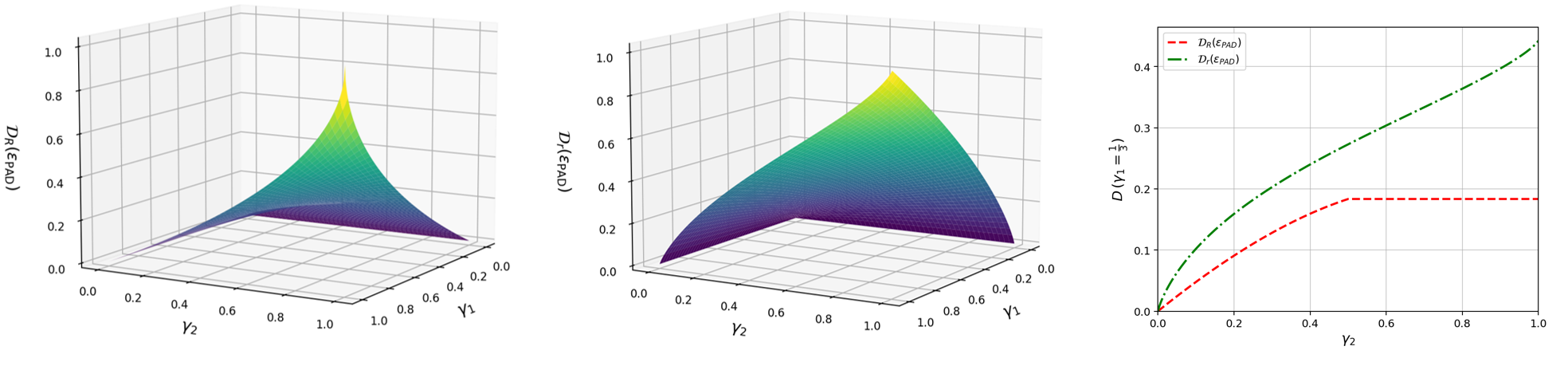}
	\caption{Reference-state imaginarity losses under a two-qubit phase-amplitude-damping channel.}
	\label{fig:9pad}
\end{figure}

Figure~\ref{fig:8ad} shows the reference-state losses of the two-qubit amplitude-damping channel. Both surfaces are symmetric under the interchange $\gamma_1\leftrightarrow\gamma_2$. The two losses vanish whenever either $\gamma_1=0$ or $\gamma_2=0$, and both attain unity at $\gamma_1=\gamma_2=1$. For the cross-section $\gamma_1=1/3$, both $\mathcal D_R(\varepsilon_{\mathrm{AD}})$ and $\mathcal D_r(\varepsilon_{\mathrm{AD}})$ increase monotonically with $\gamma_2$. The robustness-based curve is concave on $0\leq\gamma_2\leq\frac13$, has a kink at $\gamma_2=\gamma_1=\frac13$, and becomes slightly convex on $\frac13\leq\gamma_2\leq1$. The relative-entropy-based loss varies smoothly and lies above the robustness-based loss throughout the displayed nonzero-noise cross-section.

Figure~\ref{fig:9pad} shows the reference-state losses of the two-qubit phase-amplitude-damping channel defined in Eq.~\eqref{eq:PAD}. Since $\gamma_1$ and $\gamma_2$ characterize different
local damping mechanisms, the two surfaces are not symmetric under the interchange $\gamma_1\leftrightarrow\gamma_2$. Both losses vanish whenever either $\gamma_1=0$ or $\gamma_2=0$,
and both attain unity at $\gamma_1=\gamma_2=1$. For the cross-section $\gamma_1=1/3$, $\mathcal D_R(\varepsilon_{\mathrm{PAD}})$ increases with $\gamma_2$ on $0\leq\gamma_2\leq\frac12$ and then remains constant for $\frac12\leq\gamma_2\leq1$. Its plateau value is $1-\sqrt{\frac23}$. In contrast,
$\mathcal D_r(\varepsilon_{\mathrm{PAD}})$ increases monotonically throughout the entire interval $0\leq\gamma_2\leq1$. For the displayed cross-section, the relative-entropy-based loss is
larger than the robustness-based loss whenever $\gamma_2>0$.

\subsubsection{Bit-Phase-Flip and Depolarizing Channels}
Finally, we discuss two channels that are slightly different from the above channels: the two-qubit bit-phase flipping ($BPF$) channel and the two-qubit depolarizing ($DEP$) channel. 

The bit-phase flip channel \cite{Nielsen10} is a quantum channel that includes both bit flip errors and phase flip errors. For a two-qubit system, we assume that each qubit independently undergoes a bit-phase flip. The detailed matrix forms of its Kraus operators are summarized in Appendix~\ref{app:kraus-operators}; see Eq.~\eqref{eq:BPF}.

Let $\rho$ be an arbitrary real state. Through calculation, we can find that \begin{align} \varepsilon _{BPF}(\rho)=\sum_{i,j\in\left \{ 0,1 \right \} }^{}K_{ij}^{BPF}(\rho){K_{ij}^{BPF}}^{\dagger } \end{align} is a real matrix. Therefore  the imaginary powers of the $BPF$ channel with respect to separable states based on the robustness of imaginarity, and relative entropy of imaginarity are both zero.

The state $\Omega_{++}=\ket{+}\bra{+} \otimes \ket{+}\bra{+}$ after passing through the $BPF$ channel becomes:
\begin{align}
\varepsilon _{BPF}(\ket{+}\bra{+}\otimes\ket{+}\bra{+})=\sum_{i,j\in\left \{ 0,1 \right \} }^{}K_{ij}^{BPF}(\ket{+}\bra{+}\otimes\ket{+}\bra{+}){K_{ij}^{BPF}}^{\dagger }=\frac{1}{4}\begin{bmatrix}\begin{smallmatrix}
		1&-i  &-i  &-1 \\
		i&1  &1  &-i \\
		i&1  &1  &-i \\
		-1&i  &i  &1
\end{smallmatrix}	\end{bmatrix}.
\end{align}
It can be seen that the reference state $\Omega_{++}$ remains unchanged under the BPF channel. Therefore, the reference‑state losses vanish: $\mathcal{D}_R\bigl(\mathcal{E}_{\mathrm{BPF}}\bigr)=\mathcal{D}_r\bigl(\mathcal{E}_{\mathrm{BPF}}\bigr)=0$. This result concerns the selected reference state and does not, by
itself, imply that the global operational de‑imaginary powers vanish.

The depolarizing channel \cite{Nielsen10} is an important type of quantum noise. A qubit is replaced by a completely mixed state $\frac{I}{2}$ with probability $p$, and remains unchanged with probability $1-p$. For the two-qubit case, the two-qubit depolarizing channel will make the first qubit become the completely mixed state $\frac{I}{2}$ with probability $p_1$, and make the second qubit become the completely mixed state $\frac{I}{2}$ with probability $p_2$. Obviously, a real separable state remains a real state after passing through the two-qubit depolarizing channel. Therefore, the imaginary power of the two-qubit depolarizing channel with respect to separable states is zero.

The state $\Omega_{++}=\ket{+}\bra{+} \otimes \ket{+}\bra{+}$ after passing through the $DEP$ channel becomes:
\begin{align}
\varepsilon _{DEP}(\ket{+}\bra{+}\otimes\ket{+}\bra{+})=\frac{1}{4} \begin{bmatrix}\begin{smallmatrix}
		1&-i(1-p_2 )&-i(1-p_1 )&-(1-p_1 )(1-p_2 )\\
		i(1-p_2)&1&(1-p_1)(1-p_2)&-i(1-p_1)\\
		i(1-p_1)&(1-p_1)(1-p_2)&1&-i(1-p_2)\\
		-(1-p_1)(1-p_2)&i(1-p_1)&i(1-p_2)&1
\end{smallmatrix}\end{bmatrix}
\end{align}
Evaluating the reference-state losses for the output state, we obtain
\begingroup
\small
\begin{equation}
\begin{aligned}
\mathcal{D}_{R }(\varepsilon _{DEP})=&\frac{p_{1}+p_{2}-\left | p_{1}-p_{2} \right |  }{2} ,\\
\mathcal{D}_{r }(\varepsilon _{DEP})=&\frac{1-p_{1}^{\prime }p_{2}^{\prime } }{2} \log_{}{[1-p_{1}^{\prime }p_{2}^{\prime } ] }+\frac{1+p_{1}^{\prime }p_{2}^{\prime } }{2} \log_{}{[1+p_{1}^{\prime }p_{2}^{\prime } ] }-\frac{(2-p_{1} )(2-p_{2} ) }{4}\log_{}{[(2-p_{1} )(2-p_{2} )]}\\
&-\frac{p_{1}(2-p_{2} ) }{4}\log_{}{[p_{1}(2-p_{2} )]}-\frac{p _{2}(2-p _{1})}{4} \log_{}{[p _{2}(2-p _{1})]}-\frac{p_{1} p_{2} }{4}\log_{}{(p_{1} p_{2})}+1,	
\end{aligned}
\end{equation}
\endgroup
where $p_{1}^{\prime }=1-p_{1}$, $p_{2}^{\prime }=1-p_{2}$.
\begin{figure}[H]
	\centering
	\includegraphics[width=0.9\linewidth]{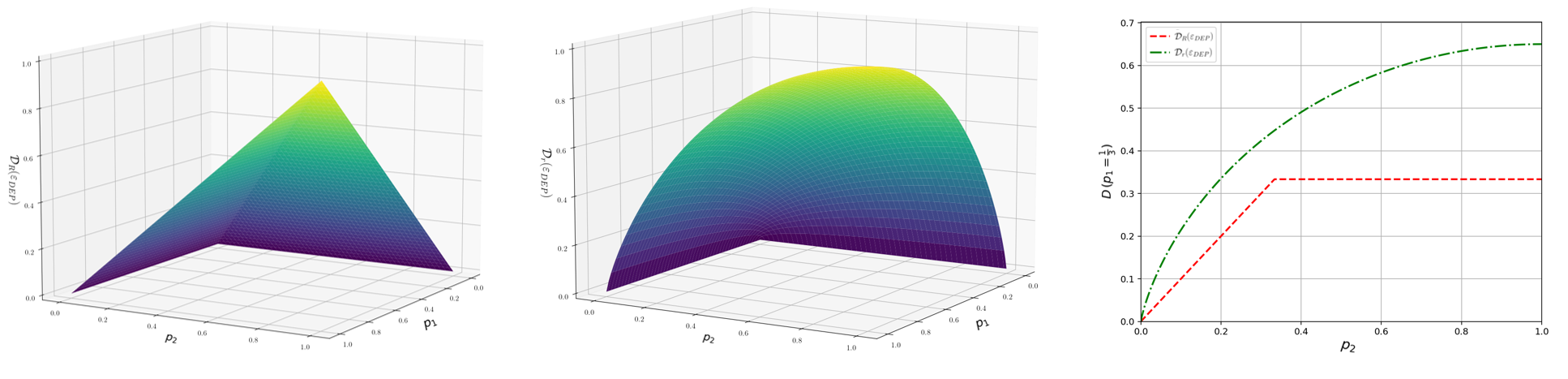}
	\caption{Reference-state imaginarity losses under a two-qubit depolarization channel.}
	\label{fig:f10dep}
\end{figure}

Figure~\ref{fig:f10dep} shows the reference-state losses of the two-qubit depolarizing channel. Both surfaces are symmetric under the interchange $p_1\leftrightarrow p_2$. The two losses vanish whenever either $p_1=0$ or $p_2=0$, and both attain unity at $p_1=p_2=1$. For the cross-section $p_1=1/3$, the robustness-based loss reduces to $\mathcal D_R(\varepsilon_{\mathrm{DEP}})=\min\left\{\frac13,p_2\right\}$. It therefore increases linearly on $0\leq p_2\leq\frac13$ and remains constant for $\frac13\leq p_2\leq1$. The relative-entropy-based loss increases monotonically with $p_2$ and gradually approaches $\mathcal H\left(\frac23\right)$ at $p_2=1$. Thus, unlike the robustness-based loss, it remains sensitive to the increase of $p_2$ after the robustness plateau has been reached.

The comparisons in Figs.~\ref{fig:6pd}--\ref{fig:f10dep} further show that the two normalized monotones need not respond identically to the same channel parameters: the robustness-based loss may develop plateaus or kinks, whereas the relative-entropy-based loss can remain smooth and parameter-sensitive. Their relative decay strength is therefore channel- and parameter-dependent.

\section{Conclusion}

In this work, we have systematically investigated the decay behavior of three imaginarity measures: the $l_1$-norm-based imaginarity measure, the robustness of imaginarity, and the relative entropy of imaginarity—under several representative quantum noise channels. For single-qubit systems, we studied arbitrary pure states under the dephasing, generalized amplitude-damping, and phase-amplitude-damping channels. We established an exact attenuation theorem showing that, under any real single-qubit quantum channel, the $l_1$-norm-based imaginarity and the robustness of imaginarity are equal and are reduced by the same channel-dependent factor. The relative entropy of imaginarity exhibits a different behavior: although its initial value is fixed by the parameter $A=|\langle\psi^*|\psi\rangle|$, its decay under a basis-dependent channel generally also depends on the orientation of the input state relative to the fixed reference basis. These results clarify both the agreement and the complementary sensitivities of different imaginarity measures.
	
For two-qubit systems, we further analyzed the decay of imaginarity for representative entangled states and dual-rail states. To provide a concrete physical interpretation, we considered a photonic dual-rail qubit subject to photon-loss noise modeled by symmetric amplitude damping. In this setting, photon loss transfers the state from the logical dual-rail subspace to the vacuum state, which represents a detectable erasure event. The three imaginarity measures considered in this work predict the same fractional loss of the initial resource, determined by the photon-loss probability. Moreover, conditioned on the no-loss outcome, the normalized logical state and its imaginarity remain unchanged. This example illustrates how the analytical results can be used to quantify the robustness of imaginarity in a specific physical setting.
	
For separable two-qubit systems, we introduced a real-operation preorder and proved that the product state $\Omega_{++}=|+\rangle\langle+|\otimes|+\rangle\langle+|$ is operationally maximal. This state provides a normalized, locally preparable, and reproducible reference for comparing different two-qubit channels. We then distinguished the imaginary power on separable states, the operational de-imaginary power, and the reference-state imaginarity loss. The imaginary powers of the real channels considered in this work are zero. For the two-qubit phase-damping, phase-flip, and bit-flip channels, we obtained exact operational de-imaginary powers based on both the robustness and relative entropy of imaginarity. For the amplitude-damping, phase-amplitude-damping, bit-phase-flip, and depolarizing channels, we obtained analytical reference-state losses. These quantities provide common benchmark comparisons and rigorous lower bounds on the corresponding operational de-imaginary powers, rather than full global optimizations.
	
Several questions remain open. Our two-qubit analysis has been carried out for selected pure states and specific noise models, and it is not yet clear to what extent the observed decay behavior persists for general mixed states or higher-dimensional systems. The same framework may also be applied to asymmetric, correlated, and non-Markovian channels. At the channel level, the exact operational de-imaginary powers of the amplitude-damping, phase-amplitude-damping, bit-phase-flip, and depolarizing channels have not yet been obtained. A related question is whether, and under what conditions, the reference-state losses derived above coincide with the corresponding operational optima. It would also be worthwhile to examine whether the real-operation preorder introduced for separable two-qubit states admits a useful extension to multipartite systems and entangled inputs.

\section*{\centering Data availability statement}
No new data were created or analysed in this study.

\section*{\centering Conflict of interest}
The authors declare that they have no conflicts of interest.

\section*{\centering Ethics statement}
This article does not contain any studies involving human participants or animals performed by any of the authors. No ethical approval or informed consent was required for this study, as it is purely theoretical in nature and does not involve any experimental data collection, human subjects, animal research, or sensitive data handling.

\clearpage
\fancyhead{}
\fancyhead[C]{REFERENCES}

\clearpage
\fancyhead{}
\fancyhead[C]{APPENDIX}
\appendix
\section*{Appendix}
The appendices collect technical details that support the results presented in the main text. To improve the readability of the manuscript, lengthy Kraus-operator representations, explicit evolved density matrices, and detailed mathematical proofs are placed here, while the main text focuses on the principal analytical results, their comparison, and their physical interpretation.

\section{Kraus Operators of Two-Qubit Channels}\label{app:kraus-operators}
\renewcommand{\theequation}{A\arabic{equation}}
\renewcommand*{\theHequation}{A.\arabic{equation}}
\setcounter{equation}{0}
For completeness, we collect in this appendix the explicit Kraus representations of the two-qubit quantum channels used in the main text. These expressions are obtained from the corresponding single-qubit channels and are provided here to avoid interrupting the main discussion with lengthy matrix forms. In Section~\ref{sec:two-qubit-channel-analysis}, these Kraus representations are used to evaluate the corresponding channel actions: for the $PD$, $PF$, and $BF$ channels, they support the exact operational de-imaginary-power analysis, whereas for the remaining channels they are used to derive the reference-state losses.
\begin{equation}\label{eq:PD}
	\begin{aligned}
		K_{00}^{PD}=&\begin{bmatrix}\begin{smallmatrix}
				1&0  &  0&0 \\
				0&\sqrt{1-\gamma _{2} }   &  0&0 \\
				0&0  &\sqrt{1-\gamma _{1} }  &0 \\
				0&0 &0  &\sqrt{(1-\gamma _{1})(1-\gamma _{2})} 
		\end{smallmatrix}\end{bmatrix},\qquad\qquad
		K_{11}^{PD}=\begin{bmatrix}\begin{smallmatrix}
				0&0  &  0& 0\\
				0&0   &  0&0 \\
				0&0  &0  &0 \\
				0&0 &0  &\sqrt{\gamma _{1}\gamma _{2} } 
		\end{smallmatrix}\end{bmatrix},\\
		K_{01}^{PD}=&\begin{bmatrix}\begin{smallmatrix}
				0&0  &  0&0 \\
				0&\sqrt{\gamma _{2} }   &  0&0 \\
				0&0  &0  &0 \\
				0&0 &0  &\sqrt{\gamma _{2}(1-\gamma _{1})} 
		\end{smallmatrix}\end{bmatrix},\qquad\qquad\qquad\qquad\quad
		K_{10}^{PD}=\begin{bmatrix}\begin{smallmatrix}
				0&0  &0   &0 \\
				0&0   &  0&0 \\
				0&0  &\sqrt{\gamma _{1} }  &0 \\
				0&0 &0  &\sqrt{\gamma _{1} (1-\gamma _{2} )} 
		\end{smallmatrix}\end{bmatrix};		
	\end{aligned}
\end{equation}
\begin{equation}\label{eq:PF}
	\begin{aligned}
		K_{00}^{PF}=&\begin{bmatrix}\begin{smallmatrix}
				\sqrt{p_{1} p_{2} }&0  &  0&0 \\
				0&\sqrt{p_{1} p_{2} }   &  0&0 \\
				0&0  &\sqrt{p_{1} p_{2} }  &0 \\
				0&0 &0  &\sqrt{p_{1} p_{2} }
		\end{smallmatrix}\end{bmatrix},\qquad
		K_{11}^{PF}=\begin{bmatrix}\begin{smallmatrix}
				\sqrt{p_{1}^{\prime } p_{2}^{\prime }}&0  &  0&0  \\
				0&-\sqrt{p_{1}^{\prime } p_{2}^{\prime }}   &  0 &0 \\
				0&0  &-\sqrt{p_{1}^{\prime } p_{2}^{\prime }}  &0 \\
				0 &0 &0  &\sqrt{p_{1}^{\prime } p_{2}^{\prime }} 
		\end{smallmatrix}\end{bmatrix},\\
		K_{01}^{PF}=&\begin{bmatrix}\begin{smallmatrix}
				\sqrt{p_{1} p_{2}^{\prime } }&0  &  0&0 \\
				0&-\sqrt{p_{1} p_{2}^{\prime } }   &  0&0 \\
				0&0  &\sqrt{p_{1} p_{2}^{\prime } }  &0\\
				0&0 &0 &-\sqrt{p_{1} p_{2}^{\prime } }
		\end{smallmatrix}\end{bmatrix},
		K_{10}^{PF}=\begin{bmatrix}\begin{smallmatrix}
				\sqrt{p_{1}^{\prime } p_{2} }&0  &  0 &0 \\
				0&\sqrt{p_{1}^{\prime } p_{2} }   &  0&0 \\
				0&0  &-\sqrt{p_{1}^{\prime } p_{2} }  &0 \\
				0&0 &0  &-\sqrt{p_{1}^{\prime } p_{2} } 
		\end{smallmatrix}\end{bmatrix};		
	\end{aligned}
\end{equation}
\begin{equation}\label{eq:BF}
	\begin{aligned}
		K_{00}^{BF}=&\begin{bmatrix}\begin{smallmatrix}
				\sqrt{p_{1} p_{2} }&0  &  0&0 \\
				0&\sqrt{p_{1} p_{2} }   &  0&0 \\
				0&0  &\sqrt{p_{1} p_{2} }  &0 \\
				0&0 &0  &\sqrt{p_{1} p_{2} }
		\end{smallmatrix}\end{bmatrix},\qquad
		K_{11}^{BF}=\begin{bmatrix}\begin{smallmatrix}
				0&0  &  0&\sqrt{p_{1}^{\prime } p_{2}^{\prime }}  \\
				0&0   &  \sqrt{p_{1}^{\prime } p_{2}^{\prime }} &0 \\
				0&\sqrt{p_{1}^{\prime } p_{2}^{\prime }}  &0  &0 \\
				\sqrt{p_{1}^{\prime } p_{2}^{\prime }} &0 &0  &0 
		\end{smallmatrix}\end{bmatrix},\\
		K_{01}^{BF}=&\begin{bmatrix}\begin{smallmatrix}
				0&\sqrt{p_{1} p_{2}^{\prime } }  &  0&0 \\
				\sqrt{p_{1} p_{2}^{\prime } }&0   &  0&0 \\
				0&0  &0  &\sqrt{p_{1} p_{2}^{\prime } }\\
				0&0 &\sqrt{p_{1} p_{2}^{\prime } } &0
		\end{smallmatrix}\end{bmatrix},\quad\;
		K_{10}^{BF}=\begin{bmatrix}\begin{smallmatrix}
				0&0  &  \sqrt{p_{1}^{\prime } p_{2} } &0 \\
				0&0   &  0&\sqrt{p_{1}^{\prime } p_{2} } \\
				\sqrt{p_{1}^{\prime } p_{2} }&0  &0  &0 \\
				0&\sqrt{p_{1}^{\prime } p_{2} } &0  &0 
		\end{smallmatrix}\end{bmatrix};	
	\end{aligned}
\end{equation}
\begin{equation}\label{eq:BPF}
	\begin{aligned}
		K_{00}^{BPF}=&\begin{bmatrix}\begin{smallmatrix}
				\sqrt{p_{1} p_{2} }&0  &  0&0 \\
				0&\sqrt{p_{1} p_{2} }   &  0&0 \\
				0&0  &\sqrt{p_{1} p_{2} }  &0 \\
				0&0 &0  &\sqrt{p_{1} p_{2} }
		\end{smallmatrix}\end{bmatrix},\qquad\;
		K_{11}^{BPF}=\begin{bmatrix}\begin{smallmatrix}
				0&0  &  0&-\sqrt{p_{1}^{\prime }p_{2}^{\prime }}  \\
				0&0   &  \sqrt{p_{1}^{\prime }p_{2}^{\prime }} &0 \\
				0&\sqrt{p_{1}^{\prime }p_{2}^{\prime }}  &0  &0 \\
				-\sqrt{p_{1}^{\prime }p_{2}^{\prime }} &0 &0  &0 
		\end{smallmatrix}\end{bmatrix},\\
		K_{01}^{BPF}=&i\begin{bmatrix}\begin{smallmatrix}
				0&-\sqrt{p_{1} p_{2}^{\prime } }  &  0&0 \\
				\sqrt{p_{1} p_{2}^{\prime } }&0   &  0&0 \\
				0&0  &0  &-\sqrt{p_{1} p_{2}^{\prime } }\\
				0&0 &\sqrt{p_{1} p_{2}^{\prime } } &0
		\end{smallmatrix}\end{bmatrix},
		K_{10}^{BPF}=i\begin{bmatrix}\begin{smallmatrix}
				0&0  &  -\sqrt{p_{1}^{\prime } p_{2} } &0 \\
				0&0   &  0&-\sqrt{p_{1}^{\prime } p_{2} } \\
				\sqrt{p_{1}^{\prime } p_{2} }&0  &0  &0 \\
				0&\sqrt{p_{1}^{\prime } p_{2} } &0  &0 
		\end{smallmatrix}\end{bmatrix},		
	\end{aligned}
\end{equation}
where $p_{1}^{\prime }=1-p_{1}$, $p_{2}^{\prime }=1-p_{2}$.

\begin{equation}\label{eq:AD}
	\begin{aligned}
		K_{00}^{AD}=&\begin{bmatrix}\begin{smallmatrix}
				1&0  &  0&0 \\
				0&\sqrt{1-\gamma _{2} }   &  0&0 \\
				0&0  &\sqrt{1-\gamma _{1} }  &0 \\
				0&0 &0  &\sqrt{(1-\gamma _{1})(1-\gamma _{2})} 
		\end{smallmatrix}\end{bmatrix},\qquad\qquad
		K_{11}^{AD}=\begin{bmatrix}\begin{smallmatrix}
				0&0  &  0&\sqrt{\gamma _{1}\gamma _{2} } \\
				0&0   &  0&0 \\
				0&0  &0  &0 \\
				0&0 &0  &0 
		\end{smallmatrix}\end{bmatrix},\\
		K_{01}^{AD}=&\begin{bmatrix}\begin{smallmatrix}
				0&\sqrt{\gamma _{2} }  &  0&0 \\
				0&0   &  0&0 \\
				0&0  &0  &\sqrt{\gamma _{2}(1-\gamma _{1})} \\
				0&0 &0  &0 
		\end{smallmatrix}\end{bmatrix},\qquad\qquad\qquad\quad\qquad
		K_{10}^{AD}=\begin{bmatrix}\begin{smallmatrix}
				0&0  &  \sqrt{\gamma _{1} } &0 \\
				0&0   &  0&\sqrt{\gamma _{1} (1-\gamma _{2} )} \\
				0&0  &0  &0 \\
				0&0 &0  &0 
		\end{smallmatrix}\end{bmatrix};			
	\end{aligned}
\end{equation}
\begin{equation}\label{eq:PAD}
	\begin{aligned}
		K_{00}^{PAD}=&\begin{bmatrix}\begin{smallmatrix}
				1&0  &  0&0 \\
				0&\sqrt{1-\gamma _{2} }   &  0&0 \\
				0&0  &\sqrt{1-\gamma _{1} }  &0 \\
				0&0 &0  &\sqrt{(1-\gamma _{1})(1-\gamma _{2})} 
		\end{smallmatrix}\end{bmatrix},\qquad\qquad
		K_{11}^{PAD}=\begin{bmatrix}\begin{smallmatrix}
				0&0  &  0&0 \\
				0&0   &  0&0 \\
				0&0  &0  &\sqrt{\gamma _{1}\gamma _{2} } \\
				0&0 &0  &0 
		\end{smallmatrix}\end{bmatrix},\\
		K_{01}^{PAD}=&\begin{bmatrix}\begin{smallmatrix}
				0&\sqrt{\gamma _{2} }  &  0&0 \\
				0&0   &  0&0 \\
				0&0  &0  &\sqrt{\gamma _{2}(1-\gamma _{1})} \\
				0&0 &0  &0 
		\end{smallmatrix}\end{bmatrix},\qquad\qquad\qquad\quad\qquad
		K_{10}^{PAD}=\begin{bmatrix}\begin{smallmatrix}
				0&0  &  0 &0 \\
				0&0   &  0&0 \\
				0&0  &\sqrt{\gamma _{1} }  &0 \\
				0&0 &0  &\sqrt{\gamma _{1} (1-\gamma _{2} )}
		\end{smallmatrix}\end{bmatrix}.		
	\end{aligned}
\end{equation}

\section{Explicit Expressions of Evolved Density Matrices under Various Quantum Channels}\label{app:evolved-density-matrices}
\renewcommand{\theequation}{B\arabic{equation}}
\renewcommand*{\theHequation}{B.\arabic{equation}}
\setcounter{equation}{0}
In this appendix, we present the explicit density matrices obtained by applying the two-qubit channels considered in Section~\ref{sec:two-qubit-channel-analysis} to the operationally maximal reference state $\Omega_{++}$. These expressions are used to derive the corresponding reference-state losses. For the PD, PF, and BF channels, they also provide the reference-state benchmarks that are compared with the exact operational results obtained in the main text.
\begingroup
\small
\begin{equation}\label{eq:pd-state-evol}
		\begin{aligned}
			\varepsilon _{PD}(\ket{+}\bra{+}\otimes\ket{+}\bra{+})=&\frac{1}{4}\begin{bmatrix}\begin{smallmatrix}
					1 & -i\sqrt{1 - \gamma_{2}} & -i\sqrt{1 - \gamma_{1}} & -\sqrt{(1 - \gamma_{1})(1 - \gamma_{2})}\\
					i\sqrt{1 - \gamma_{2}} & 1 & \sqrt{(1 - \gamma_{2})(1 - \gamma_{1})} & -i\sqrt{1 - \gamma_{1}}\\
					i\sqrt{1 - \gamma_{1}} & \sqrt{(1 - \gamma_{2})(1 - \gamma_{1})} & 1 & -i\sqrt{1 - \gamma_{2}}\\
					-\sqrt{(1 - \gamma_{1})(1 - \gamma_{2})} & i\sqrt{1 - \gamma_{1}} & i\sqrt{1 - \gamma_{2}} & 1
			\end{smallmatrix}\end{bmatrix};
	\end{aligned}
\end{equation}\endgroup
\begingroup
\small
\begin{equation}\label{eq:pf-state-evol}
		\begin{aligned}
			\varepsilon _{PF}(\ket{+}\bra{+}\otimes\ket{+}\bra{+})= &\frac{1}{4} \begin{bmatrix}\begin{smallmatrix}
					1&-i(2p_2 - 1)&-i(2p_1 - 1)&-(2p_1 - 1)(2p_2 - 1)\\
					i(2p_2 - 1)&1&(2p_1 - 1)(2p_2 - 1)&-i(2p_1 - 1)\\
					i(2p_1 - 1)&(2p_1 - 1)(2p_2 - 1)&1&-i(2p_2 - 1)\\
					-(2p_1 - 1)(2p_2 - 1)&i(2p_1 - 1)&i(2p_2 - 1)&1
			\end{smallmatrix}\end{bmatrix};
	\end{aligned}
\end{equation}\endgroup
\begingroup
\small
\begin{equation}\label{eq:bf-state-evol}
		\begin{aligned}
			\varepsilon _{BF}(\ket{+}\bra{+}\otimes\ket{+}\bra{+})= &\frac{1}{4} \begin{bmatrix}\begin{smallmatrix}
					1&-i(2p_2 - 1)&-i(2p_1 - 1)&-(2p_1 - 1)(2p_2 - 1)\\
					i(2p_2 - 1)&1&(2p_1 - 1)(2p_2 - 1)&-i(2p_1 - 1)\\
					i(2p_1 - 1)&(2p_1 - 1)(2p_2 - 1)&1&-i(2p_2 - 1)\\
					-(2p_1 - 1)(2p_2 - 1)&i(2p_1 - 1)&i(2p_2 - 1)&1
			\end{smallmatrix}\end{bmatrix}.
	\end{aligned}
\end{equation}\endgroup
\begingroup
\small
\begin{equation}\label{eq:ad-state-evol}
		\begin{aligned}
			\varepsilon _{AD}(\ket{+}\bra{+}\otimes\ket{+}\bra{+})
			= &\frac{1}{4} \begin{bmatrix}\begin{smallmatrix}
					1+\gamma _{1}+\gamma _{2}+\gamma _{1}\gamma _{2} &-i(1+\gamma _{1} )\sqrt{1-\gamma _{2} }  &-i(1+\gamma _{2})  \sqrt{1-\gamma _{1} }&-\sqrt{(1-\gamma _{1})(1-\gamma _{2})} \\
					i(1+\gamma _{1} )\sqrt{1-\gamma _{2} }&(\gamma _{1}+1) (1-\gamma _{2}) &\sqrt{(1-\gamma _{1})(1-\gamma _{2})}  &-i(1-\gamma _{2})\sqrt{1-\gamma _{1} } \\
					i(1+\gamma _{2})  \sqrt{1-\gamma _{1} }&\sqrt{(1-\gamma _{1})(1-\gamma _{2})}  &(1-\gamma _{1})(1+\gamma _{2})  &-i(1-\gamma _{1})\sqrt{1-\gamma _{2} } \\
					-\sqrt{(1-\gamma _{1})(1-\gamma _{2})} &i(1-\gamma _{2})\sqrt{1-\gamma _{1} }  &i(1-\gamma _{1})\sqrt{1-\gamma _{2} }  &(1-\gamma _{1})(1-\gamma _{2})
			\end{smallmatrix}\end{bmatrix} ,
		\end{aligned}
\end{equation}\endgroup
\begingroup
\small
\begin{equation}\label{eq:pad-state-evol}
		\begin{aligned}
			\varepsilon _{PAD}(\ket{+}\bra{+}\otimes\ket{+}\bra{+})
			= &\frac{1}{4} \begin{bmatrix}\begin{smallmatrix}
					1 + \gamma_2 & -i\sqrt{1 - \gamma_2} & -i(1 + \gamma_2)\sqrt{1 - \gamma_1} & -\sqrt{(1 - \gamma_1)(1 - \gamma_2)}\\
					i\sqrt{1 - \gamma_2} & 1 - \gamma_2 & \sqrt{(1 - \gamma_1)(1 - \gamma_2)} & -i(1 - \gamma_2)\sqrt{1 - \gamma_1}\\
					i(1 + \gamma_2)\sqrt{1-\gamma_1} & \sqrt{(1 - \gamma_1)(1 - \gamma_2)} & 1 + \gamma_2 & -i\sqrt{1 - \gamma_2}\\
					-\sqrt{(1 - \gamma_1)(1 - \gamma_2)} & i(1 - \gamma_2)\sqrt{1-\gamma_1} & i\sqrt{1 - \gamma_2} & 1 - \gamma_2
			\end{smallmatrix}\end{bmatrix}.
		\end{aligned}
\end{equation}\endgroup

\section{Proofs of Theorem 3.1, Corollary 3.2, and Proposition 4.3}\label{app:proofs}
\renewcommand{\theequation}{C\arabic{equation}}
\renewcommand*{\theHequation}{C.\arabic{equation}}
\setcounter{equation}{0}
For completeness, we collect in this appendix the detailed proofs of Theorem~\ref{thm:real-qubit-attenuation}, Corollary~\ref{cor:canonical-state-reduction}, and Proposition~\ref{prop:operational-deimaginary-PD-PF-BF}. These derivations are separated from the main text in order to keep the presentation focused on the main physical results and the comparison among different imaginarity measures. These proofs further clarify why the $\ell_1$-norm‑based imaginarity and the robustness of imaginarity exhibit identical attenuation behavior for single-qubit states under arbitrary real quantum channels, and support the operational characterization of de-imaginary power established in Proposition~4.3.

\subsection{Proof of Theorem 3.1}\label{app:proof-real-qubit-attenuation}
An arbitrary single-qubit density operator can be written in the Bloch form as
	\begin{align}
		\rho=\frac{1}{2}\left(I+x\sigma_x+y\sigma_y+z\sigma_z\right)=
		\frac{1}{2}\begin{pmatrix}
			1+z & x-iy\\
			x+iy & 1-z
		\end{pmatrix},
	\end{align}
	where $x,y,z\in\mathbb{R}$ and $x^2+y^2+z^2\leq 1$. The real vector $\boldsymbol{r}=(x,y,z)$ is called the Bloch vector of the state. The set of all single-qubit
	density operators is represented by the Bloch ball: mixed states lie
	inside the ball, whereas pure states satisfy $x^2+y^2+z^2=1$ and lie on the surface of the Bloch sphere. In the fixed computational
	basis, the $y$-component is the only component responsible for the
	imaginary entries of the density matrix.

Let $J=\begin{pmatrix}
			0 & -1\\
			1 & 0
		\end{pmatrix}$. The entrywise imaginary part of $\rho$ is $\operatorname{Im}\rho=\frac{y}{2}J$. Therefore,
	\begin{align}
		\mathscr F_{l_{1}}(\rho)=|\operatorname{Im}\rho_{01}|+|\operatorname{Im}\rho_{10}|=\left|-\frac{y}{2}\right|+\left|\frac{y}{2}\right|=|y|.
	\end{align}
	Moreover,
	\begin{align}
		\rho-\rho^T=iyJ.
	\end{align}
	Since the two singular values of $J$ are both equal to $1$, the two
	singular values of $\rho-\rho^T$ are both equal to $|y|$. Hence,
	\begin{align}
	\mathscr	F_{R}(\rho)=\frac{1}{2}\|\rho-\rho^T\|_1=|y|.
	\end{align}
	Consequently, $\mathscr F_{l_{1}}(\rho)=\mathscr F_R(\rho)=|y|$. For every real $2\times2$ matrix $K$, a direct calculation gives $KJK^T=(\det K)J$.
	Since all Kraus operators of $\varepsilon$ are real, it follows that
	\begin{equation}
		\begin{aligned}
			\operatorname{Im}\varepsilon(\rho)
			&=
			\sum_m K_m(\operatorname{Im}\rho)K_m^T=\frac{y}{2}\sum_m K_mJK_m^T\\
			&=
			\frac{y}{2}\sum_m(\det K_m)J=\frac{\eta_\varepsilon y}{2}J,
		\end{aligned}
	\end{equation}
	where $\eta_\varepsilon=\sum_m\det K_m$.
	Thus, the $y$-component of the Bloch vector is transformed according
	to $y\longmapsto \eta_\varepsilon y$. It follows that
	\begin{align}
		\mathscr F_{l_{1}}\!\left(\varepsilon(\rho)\right)=|\eta_\varepsilon|\mathscr F_{l_{1}}(\rho),
	\end{align}
	and
	\begin{align}
	\mathscr	F_{R}\!\left(\varepsilon(\rho)\right)=|\eta_\varepsilon|\mathscr F_{R}(\rho).
	\end{align}

The preceding relations show that the action of the real channel on
	$\mathscr F_{l_1}$ and $\mathscr F_R$ is completely characterized by the
	channel-dependent factor $|\eta_{\varepsilon}|$. It remains to verify
	that this factor cannot exceed unity.

Let the two columns of the real Kraus operator $K_m$ be denoted by
	$\boldsymbol{a}_m$ and $\boldsymbol{b}_m$, respectively, so that $K_m=\begin{bmatrix}
		\boldsymbol{a}_m & \boldsymbol{b}_m
	\end{bmatrix}$.
	Since $\varepsilon$ is trace preserving, its Kraus operators satisfy $\sum_m K_m^T K_m = I_{2}$. By comparing the two diagonal entries on both sides, we obtain
	\begin{align}
		\sum_m \|\boldsymbol{a}_m\|^2 = 1,
		\qquad
		\sum_m \|\boldsymbol{b}_m\|^2 = 1.
	\end{align}
	For every real $2\times 2$ matrix $K_m$, the Gram determinant identity
	gives
	\begin{align}
		|\det K_m|^2=\|\boldsymbol{a}_m\|^2\|\boldsymbol{b}_m\|^2-|\boldsymbol{a}_m^T\boldsymbol{b}_m|^2.
	\end{align}
	Therefore, $|\det K_m|\leq\|\boldsymbol{a}_m\|\|\boldsymbol{b}_m\|$.
	Using the triangle inequality and the Cauchy--Schwarz inequality, we
	find that
	\begin{equation}
		\begin{aligned}
			|\eta_{\varepsilon}|
			&=
			\left|
			\sum_m \det K_m
			\right|
			\leq
			\sum_m |\det K_m|\\
			&\leq
			\sum_m
			\|\boldsymbol{a}_m\|
			\|\boldsymbol{b}_m\|
			\leq
			\sqrt{\sum_m\|\boldsymbol{a}_m\|^2}\sqrt{\sum_m\|\boldsymbol{b}_m\|^2}=1.
		\end{aligned}
	\end{equation}
	Hence, we obtain $0\leq |\eta_{\varepsilon}|\leq 1$.

Combining this bound with the scaling relations established above, we
	obtain
	\begin{equation}
		\begin{aligned}
			\Delta I_{l_1}^{\varepsilon}(\rho)
			=\mathscr F_{l_1}(\rho)-\mathscr F_{l_1}\!\left(\varepsilon(\rho)\right)
			=\left(1-|\eta_{\varepsilon}|\right)\mathscr F_{l_1}(\rho)
			=\left(1-|\eta_{\varepsilon}|\right)|y|
			\geq 0,
		\end{aligned}
	\end{equation}
	and similarly,
	\begin{equation}
		\begin{aligned}
			\Delta I_R^{\varepsilon}(\rho)
			=\mathscr F_R(\rho)-\mathscr F_R\!\left(\varepsilon(\rho)\right)
			=\left(1-|\eta_{\varepsilon}|\right)\mathscr F_R(\rho)
		=\left(1-|\eta_{\varepsilon}|\right)|y|
			\geq 0.
		\end{aligned}
	\end{equation}
	Thus, every real single-qubit quantum channel is non-increasing with
	respect to both $\mathscr F_{l_1}$ and $\mathscr F_R$.

Finally, we express the above general result in terms of the pure-state parameter $A$ used in Eq.~\eqref{a pure state under a real orthogonal transformation}. Let $\rho=|\psi\rangle\langle\psi|$ be a pure single-qubit state. Its Bloch vector satisfies $x^2+y^2+z^2=1$. Moreover,
	\begin{align}
		A^2=|\langle\psi^*|\psi\rangle|^2=\operatorname{Tr}(\rho\rho^T)=\frac{1}{2}\left(1+x^2-y^2+z^2\right)=x^2+z^2.
	\end{align}
	In the last equality, we have used the pure-state condition $x^2+y^2+z^2=1$.
	Consequently,
	\begin{align}
		y^2=1-A^2,
	\end{align}
	and hence
	\begin{align}
	\mathscr	F_{l_1}(\rho)=\mathscr F_R(\rho)=|y|=\sqrt{1-A^2}.
	\end{align}
	Therefore, for every pure single-qubit state,
	\begin{align}	
		\Delta I_{l_1}^{\varepsilon}=\Delta I_R^{\varepsilon}=\left(1-|\eta_{\varepsilon}|\right)\sqrt{1-A^2}.
	\end{align}
	This completes the proof.

\subsection{Proof of Corollary 3.2}\label{app:proof-canonical-state-reduction}
Define the channel induced by the real orthogonal matrix $O$ as $\mathcal{O}(\cdot)=O(\cdot) O^T$. Since $O^TO=I_2$ and $O$ is real, $\mathcal{O}$ is a real single-qubit quantum channel with the single Kraus operator $O$. Its attenuation coefficient, as defined in Theorem \ref{thm:real-qubit-attenuation}, is $\eta_{\mathcal{O}}=\det O$. Because $O$ is orthogonal, we have $\det O=\pm1$ and hence $|\eta_{\mathcal{O}}|=|\det O|=1$. Applying Theorem \ref{thm:real-qubit-attenuation} to $\mathcal{O}$ gives 
		\begin{equation} \begin{aligned} \mathscr F_K(O\rho O^T) = \mathscr F_K\!\left(\mathcal{O}(\rho)\right)= |\eta_{\mathcal{O}}|\mathscr F_K(\rho)= \mathscr F_K(\rho), \end{aligned} \qquad K\in\{l_1,R\}. 
		\end{equation} 
		Thus, both $\mathscr F_{l_1}$ and $\mathscr F_R$ are invariant under real orthogonal transformations of a single-qubit state. We now apply Theorem~\ref{thm:real-qubit-attenuation} to the real channel $\varepsilon$. Denoting its attenuation coefficient by $\eta_{\varepsilon}$, we obtain \begin{equation} 
			\begin{aligned} \mathscr F_K\!\left(\varepsilon(O\rho O^T)\right) = |\eta_{\varepsilon}| \mathscr F_K(O\rho O^T)= |\eta_{\varepsilon}|\mathscr F_K(\rho)= \mathscr F_K\!\left(\varepsilon(\rho)\right), 
			\end{aligned} \qquad K\in\{l_1,R\}.
		\end{equation} 
		This proves the first assertion. Finally, using the invariance of the input quantities and the equality of the corresponding output quantities, we find that \begin{equation}
			\begin{aligned} \Delta I_K^{\varepsilon}(O\rho O^T) &=\mathscr F_K(O\rho O^T) - \mathscr F_K\!\left(\varepsilon(O\rho O^T)\right)\\ &=\mathscr F_K(\rho) - \mathscr F_K\!\left(\varepsilon(\rho)\right)\\ &= \Delta I_K^{\varepsilon}(\rho), 
		\end{aligned} \end{equation}
		for $K\in\{l_1,R\}$. This completes the proof. 

\subsection{Proof of Proposition 4.3}\label{app:proof-operational-deimaginary}
Every $\Omega\in\mathcal M_{\mathrm{op}}^{\mathrm{sep}}$ is freely interconvertible with $\Omega_{++}$. Hence, by monotonicity, $\mathscr F_R(\Omega)=\mathscr F_r(\Omega)=1$. For a real Pauli operator $U\in\{X,Z\}$ and
$s\in[-1,1]$, define
\begin{align}
\Phi_{s,U}(A)=\frac{1+s}{2}A+\frac{1-s}{2}UAU.
\end{align}
The three channels can be written as
\begin{equation}
\begin{aligned}
\varepsilon_{\mathrm{PD}}=&\Phi_{\alpha_1,Z}\otimes\Phi_{\alpha_2,Z},\\
\varepsilon_{\mathrm{PF}}=&\Phi_{t_1,Z}\otimes\Phi_{t_2,Z},\\
\varepsilon_{\mathrm{BF}}=&\Phi_{t_1,X}\otimes\Phi_{t_2,X}.
\end{aligned}
\end{equation}
For $s\neq0$, the inverse linear map is \begin{align}\Phi_{s,U}^{-1}(A)=\frac{1+s^{-1}}{2}A+\frac{1-s^{-1}}{2}UAU.\end{align}
Since unitary conjugation preserves the trace norm, $\left\|\Phi_{s,U}^{-1}(A)\right\|_1\leq\frac{1}{|s|}\|A\|_1$, which implies $\left\|\Phi_{s,U}(A)\right\|_1\geq|s|\,\|A\|_1$.

Applying this inequality successively to the two local channels gives $\left\|\varepsilon(A)\right\|_1\geq c\,\|A\|_1$, where $c=c_{\mathrm{PD}}$ for the PD channel and
$c=c_{\mathrm F}$ for the PF and BF channels. The case $c=0$ follows directly by continuity.

Because these channels are real, $\varepsilon(\Omega)-\varepsilon(\Omega)^{\mathrm T}=\varepsilon\left(\Omega-\Omega^{\mathrm T}\right)$. Therefore, $\mathscr F_R\left(\varepsilon(\Omega)\right)\geq c\,\mathscr F_R(\Omega)=c$ for every
$\Omega\in\mathcal M_{\mathrm{op}}^{\mathrm{sep}}$. Thus, $\mathcal D_R^{\mathrm{op}}(\varepsilon)\leq1-c$.

By Lemma~\ref{lem:Fr-FR-relation}, $\mathscr F_r\left(\varepsilon(\Omega)\right)\geq1-\mathcal H\!\left(\mathscr F_R\left(\varepsilon(\Omega)\right)\right)$. Since $\mathcal H$ is decreasing on $[0,1]$ and $\mathscr F_R\left(\varepsilon(\Omega)\right)\geq c$, we obtain $\mathscr F_r\left(\varepsilon(\Omega)\right)\geq1-\mathcal H(c)$. Since every $\Omega\in\mathcal M_{\mathrm{op}}^{\mathrm{sep}}$ satisfies $\mathscr F_r(\Omega)=1$, it follows that $\mathscr F_r(\Omega)-\mathscr F_r\left(\varepsilon(\Omega)\right)\leq\mathcal H(c)$. Therefore, $\mathcal D_r^{\mathrm{op}}(\varepsilon)\leq\mathcal H(c)$.

To show that both bounds are attainable, set $P=X$ for the PD and PF channels, $P=Z$
for the BF channel, and define $\Omega_P^\star=\frac14\left(I\otimes I+Y\otimes P\right)$.

Let $\omega_\pm=\frac12(I\pm Y)$, $\pi_\pm^P=\frac12(I\pm P)$. Then $\Omega_P^\star=\frac12\left(\omega_+\otimes\pi_+^P+\omega_-\otimes\pi_-^P\right)$, which is a convex combination of two product states. Hence, $\Omega_P^\star$ is separable.

We next verify that it is operationally maximal. Let
$\pi_\pm^P=|p_\pm\rangle\langle p_\pm|$ and define the real Kraus
operators
\begin{align}
K_+
=
I\otimes|0\rangle\langle p_+|,
\qquad
K_-
=
Z\otimes|0\rangle\langle p_-|.
\end{align}
They satisfy $K_+^\dagger K_++K_-^\dagger K_-=I_4$. Moreover, since $Z\omega_-Z=\omega_+$, the corresponding real channel gives
\begin{align}
\sum_{\mu=\pm}
K_\mu\Omega_P^\star K_\mu^\dagger
=
\omega_+\otimes|0\rangle\langle0|.
\end{align}
The real and imaginary parts of
$|+\rangle\otimes|0\rangle$ and
$|+\rangle\otimes|+\rangle$ form orthogonal pairs with the same norms. Therefore, there exists a real orthogonal matrix $O$ such that $O\left(|+\rangle\otimes|0\rangle\right)=|+\rangle\otimes|+\rangle$. Consequently, $\Omega_P^\star\longrightarrow\Omega_{++}$ by a real operation. Since $\Omega_{++}$ is operationally maximal, we conclude that $\Omega_P^\star\in \mathcal M_{\mathrm{op}}^{\mathrm{sep}}$.

For the corresponding channels, let
\[
s_i
=
\begin{cases}
	\alpha_i=\sqrt{1-\gamma_i},
	&
	\text{for the PD channel},
	\\
	t_i=2p_i-1,
	&
	\text{for the PF and BF channels},
\end{cases}
\]
and define $q=s_1s_2$, $c=|q|$. Since $Y\longmapsto s_1Y$, $P\longmapsto s_2P$, we obtain $\varepsilon(\Omega_P^\star)=\frac14\left(I_4+qY\otimes P\right)$. Using $Y^{\mathrm T}=-Y$, $P^{\mathrm T}=P$, $\|Y\otimes P\|_1=4$, we find $\mathscr F_R\left(\varepsilon(\Omega_P^\star)\right)=c$. Since $\mathscr F_R(\Omega_P^\star)=1$, the corresponding robustness loss is $1-c$.

Furthermore, $\operatorname{Re}\left[\varepsilon(\Omega_P^\star)\right]=\frac{I_4}{4}$, and the eigenvalues of $\varepsilon(\Omega_P^\star)$ are $\frac{1+c}{4}$, $\frac{1+c}{4}$, $\frac{1-c}{4}$, $\frac{1-c}{4}$. Therefore, $S\!\left(\operatorname{Re}\varepsilon(\Omega_P^\star)\right)=2$ and $S\!\left(\varepsilon(\Omega_P^\star)\right)=1+\mathcal H(c)$. It follows that $\mathscr F_r\left(\varepsilon(\Omega_P^\star)\right)=1-\mathcal H(c)$. Since $\mathscr F_r(\Omega_P^\star)=1$, the corresponding relative-entropy loss is $\mathcal H(c)$.

Thus, $\Omega_P^\star$ attains the previously established upper bounds. Hence,
\begin{align}
\mathcal D_R^{\mathrm{op}}(\varepsilon)
=
1-c,
\qquad
\mathcal D_r^{\mathrm{op}}(\varepsilon)
=
\mathcal H(c).
\end{align}

\end{document}